\documentclass[reqno]{amsart}
\title[Counterexamples to the $L^1$ and $L^{\infty}$ boundedness of wave operators] {Counterexamples to the $L^1$ and $L^{\infty}$ boundedness of the one-dimensional wave operators}
\author{Sisi Huang and Xiaohua Yao}

\address{Sisi Huang, Department of Mathematics, Central China Normal University, Wuhan, 430079, P.R. China}
\email{hss@mails.ccnu.edu.cn}

\address{Xiaohua Yao, Department of Mathematics, Key Laboratory of Nonlinear Analysis and Applications(Ministry of Education), and Hubei Key Laboratory of Mathematical Sciences, Central China Normal University, Wuhan, 430079, P.R. China}
\email{yaoxiaohua@ccnu.edu.cn}

\date{\today}
\keywords{$L^p$ boundedness, Wave operators, Counterexamples, Schr\"odinger operator, One dimension}
\usepackage{color}
\usepackage{amsmath}
\usepackage{amssymb}
\usepackage[mathscr]{euscript}
\usepackage{mathrsfs}
\usepackage{paralist}
\usepackage{amsthm}
\usepackage{ulem}
\usepackage{bm}
\usepackage{extarrows}
\usepackage{verbatim}
\usepackage{cite}
\usepackage{hyperref}
\usepackage{esint}
\newtheorem{definition}{Definition}[section]
\newtheorem{theorem}[definition]{Theorem}
\newtheorem{lemma}[definition]{Lemma}
\newtheorem{remark}[definition]{Remark}
\newtheorem{proposition}[definition]{Proposition}

\newcommand\R{\mathbb{R}}
\newcommand\Z{\mathbb{Z}}
\newcommand\B{\mathbb{B}}

\newcommand\N{\mathbb{N}}
\newcommand\mcaF{\mathcal{F}}

\newcommand\mcaA{\mathcal{A}}
\newcommand\mcaK{\mathcal{K}}
\newcommand\HH{\mathcal{H}}
\newcommand\mcaR{\mathcal{R}}
\newcommand\mcaS{\mathcal{S}}
\newcommand\mscH{\mathscr{H}}

\newcommand\mcaT{\mathcal{T}}

\numberwithin{equation}{section}

\begin{document}

\maketitle

\begin{abstract} 
It is well established that the wave operators $W_{\pm}(H,-\Delta)$ for the one-dimensional Schr\"{o}dinger operator $H=-\Delta+V(x)$ are bounded on $L^p(\R)$ for all $1<p<\infty$ in both generic and exceptional cases. They are also bounded on $L^1(\R)$ and $L^{\infty}(\R)$ in the exceptional case with $\lim\limits_{x\rightarrow-\infty}f_+(0,x)=1$. For the remaining endpoint cases, it has long been expected that they are generally unbounded at the endpoints $p=1,\infty$ due to the presence of the Hilbert transform in the low energy part, yet a rigorous proof has been missing.

In this paper, we show that even for a bounded and compactly supported non-zero potential $V$, the wave operators $W_{\pm}(H,-\Delta)$ are unbounded on $L^1(\R)$ and $L^{\infty}(\R)$ in the generic case, as well as in the exceptional case with the  condition $\lim\limits_{x\rightarrow-\infty}f_+(0,x)\neq1$. Moreover, in the latter case, they are even unbounded from $L^{\infty}(\R)$ to ${\rm BMO}(\R)$ (Bounded
Mean Oscillation space). Hence together with those known results, our
counterexamples complete the picture of the $L^{p}$ boundedness of
one-dimensional wave operators. 
\end{abstract}

\tableofcontents

\baselineskip=15pt
\section{Introduction and main result}

\subsection{Introduction}
In this paper, we consider the one-dimensional Schr\"{o}dinger operator 
 $$H=-\Delta+V(x),\quad \Delta=\frac{d^2}{dx^2},$$
where $V(x)$ is a real-valued potential satisfying $|V(x)|\lesssim\left<x\right>^{-\beta}$ for some $\beta>2$, with $\left<x\right>=\sqrt{1+|x|^2}$. Under this assumption, by the Kato-Rellich theorem, both $-\Delta$ and $H$ are self-adjoint operators on $L^2(\R)$ with domain $H^{2}(\mathbb R)$ (the Sobolev space of order $2$). They generate the unitary groups $e^{it\Delta}$ and $e^{itH}$, respectively. Moreover, it is well known that $H$ has no positive or zero eigenvalues, all eigenvalues are strictly negative, the absolutely continuous spectrum is $[0,+\infty)$, and the singular continuous spectrum is absent (cf. \cite{Sim15,Kat95}).

The {\it wave operators} associated with $H$ are defined as the following strong limits on $L^2(\R)$:
\begin{equation}\label{definition of wo}
W_{\pm}:=W_{\pm}(H,-\Delta):=s\mbox{-}\lim_{t\to\pm{\infty}}e^{itH}e^{it\Delta}.
\end{equation}
It is known (cf. \cite{Agm75, Kur73, Hor05}) that $W_{\pm}$ exist as partial isometries from $L^2(\R)$ onto $\HH_{ac}(H)$ (the absolutely continuous spectral subspace of $H$), hence are  
$L^2$-bounded. Subsequent works such as Weder \cite{Wed99}, Galtbayar-Yajima \cite{GY00} and D'Ancona-Fanelli \cite{DF06} further established that $W_{\pm}$ are $L^p$ bounded for all $1<p<\infty$ in both generic and exceptional cases (see Definition \ref{defin of generic and excep} below) under various conditions on the potential. Moreover, Weder proved that $W_{\pm}$ are also bounded on $L^1(\R)$ and $L^{\infty}(\R)$ if $V$ is of exceptional type and $\lim\limits_{x\rightarrow-\infty}f_+(0,x)=1$. On the other hand, he pointed out that, in general, they
are not bounded on $L^1(\R)$ or $L^{\infty}(\R)$ because the low energy part of $W_{\pm}$
contains terms with the Hilbert transform $\HH$. 

More precisely, he showed that the low energy part admits the decomposition
\begin{equation}\label{Wed dec}
W_{\pm}\Psi(-\Delta)=W_{\pm,r}\pm \frac{i}{2} \chi_1(x)f_{+}(0,x)\HH \Psi(-\Delta)(a_1+a_2\tau)\mp \frac{i}{2}(1-\chi_1(x))f_{-}(0,x)\HH \Psi(-\Delta)(a_1+a_3\tau),
\end{equation}
where $\Psi\in C^{\infty}_{0}(\R)$ satisfies $\Psi(x)=1$ for $|x|\leq \mu_0$ ($\mu_0>0$ small), $W_{\pm,r}$ are $L^p$ bounded for all $1\le p\le \infty$, $\chi_1(x)\in C^{\infty}(\R)$ with  $\chi_1(x)=0$ for $x\leq0$ and $\chi_1(x)=1$ for $x\geq1$, $f_{\pm}(0,x)$ are Jost functions, $(\tau f)(x):=f(-x)$, and
\begin{equation*}
a_1=T(0)-1,\quad a_2=R_1(0),\quad a_3=R_2(0)
\end{equation*}
with $T(0)$ the transmission coefficient and $R_j(0)$ the reflection coefficients at zero energy. This decomposition offers important insight into the possible endpoint singularities, as it isolates them into terms involving the Hilbert transform, whose coefficients are explicitly expressed through the zero-energy transmission and reflection coefficients. In the exceptional case with $\lim\limits_{x\rightarrow-\infty}f_+(0,x)=1$, all $a_j$ vanish and the decomposition contains no Hilbert transform, which immediately yields the $L^1$ and $L^{\infty}$ boundedness proved by Weder \cite{Wed99}.

However, the Hilbert transform terms persist for the remaining two cases ({\bf the generic case, and the exceptional case with} $\bm{\lim\limits_{x\rightarrow-\infty}f_+(0,x)\neq1}$).
Although the Hilbert transform $\HH$ itself is unbounded on $L^1(\R)$ and $L^{\infty}(\R)$, to the best of our knowledge, a rigorous proof that the composed operators in \eqref{Wed dec}
are unbounded on these two endpoint spaces is not clear. 
First, the spatial multipliers $\chi_{1}(x)f_{\pm}(0,x)$ generally do not commute with
$\mathcal{H}$ and one cannot directly infer the unboundedness
of the composite operators from that of $\mathcal{H}$ alone. Second, the two Hilbert terms are summed, and a possible cancellation between
their singular parts cannot be excluded apriori. 


Motivated by this, we provide a proof for these two remaining cases by a method that avoids the delicate Jost function asymptotics used in the above decomposition. We show that, even for a bounded and compactly supported non-zero potential $V$, $W_{\pm}$ are unbounded on $L^1(\R)$ and $L^{\infty}(\R)$ in the generic case, 
as well as in the exceptional case with 
$\lim\limits_{x\rightarrow-\infty}f_+(0,x)\neq1$. Moreover, in the latter case, 
they are even unbounded from $L^{\infty}(\R)$ to $\mathrm{BMO}(\R)$ (Bounded Mean Oscillation space)  (see Theorem \ref{thm of Main reult}).

To prove this, 
we start from the stationary representation (cf. \cite{Kur73,RS75,RS78}) of $W_{-}$ (the case $W_{+}$ is similar since $W_{+}f=\overline{W_{-}\overline{f}}$), 
$$W_{-}=I-\frac{1}{\pi i}\int_{0}^{+\infty} \lambda R^+_{V}({\lambda}^{2})V(R^+_{0}-R^-_{0})({\lambda}^{2})d\lambda,$$
where $R^{\pm}_{0}(\lambda)$ and $R^{+}_{V}(\lambda)$ are boundary values of free and perturbed resolvents on $(0,+\infty)$, respectively. Fix a sufficiently small constant $0<\lambda_0\ll1$ and choose a smooth cutoff function  $\chi(\lambda)\in C^{\infty}_{0}(\R)$ such that $\chi(\lambda)\equiv1 $ on $(-\lambda_{0},\lambda_{0})$ and  $\chi(\lambda)\equiv0 $ on $(-\infty,-2\lambda_{0})\cup(2\lambda_{0},+\infty)$. We further decompose 
\begin{equation*}
W_{-}=I-\frac{1}{\pi i}(W^L_{-}+W^H_{-}),
\end{equation*}
where the low energy part $W^L_{-}$ and the high energy part $W^H_{-}$ are given by 
\begin{align}
W^{L}_{-}&=\int_{0}^{+\infty}\lambda\chi(\lambda)R^{+}_{V}(\lambda^{2})V(R^+_{0}-R^-_{0})({\lambda}^{2})d\lambda,\notag\\ 
W^{H}_{-}&=\int_{0}^{+\infty}\lambda(1-\chi(\lambda))R^{+}_{V}(\lambda^{2})V(R^+_{0}-R^-_{0})({\lambda}^{2})d\lambda.\label{high energy part}
\end{align}
 It is known that the high energy part of wave operators is $L^p$ bounded for all $1\leq p\leq \infty$ (cf. \cite{Wed99}, another proof the $L^p$ boundedness of $W^H_{-}$ also see Appendix \ref{subsec of high energy part}). Therefore, {\bf in this paper we only focus on the low energy part} $\bm {W^L_{-}}$. To this end, we first study the asymptotic behavior of $R^{+}_V(\lambda^2)$ near $\lambda=0$ (see Subsection \ref{sec of asy of RV}). This enables us to isolate the singular terms at the endpoints $p = 1, \infty$, which exhibit a clearer structure and admit a fairly explicit integral kernel representation (see Subsection \ref{subsec of lp bounds of low energy}). By analyzing these kernels and testing the behaviors of these singular operators on some specific functions, we prove their unboundedness and establish the desired result (see Section \ref{sec of counterex}). 

\subsection{Main result}\label{sec: main result}
To illustrate our results, we introduce some definitions and notation. First recall that the {\it Jost functions} $f_{\pm}(\lambda,x)$ are the solutions to the equation 
 \begin{equation*}
     -f''(\lambda,x)+V(x)f(\lambda,x)=\lambda^2f(\lambda,x),\quad \lambda, x\in\R
 \end{equation*}
 satisfying the following asymptotic conditions
 \begin{equation*}
     |f_{\pm}(\lambda,x)-e^{\pm i\lambda x}|\rightarrow0\ \ {\rm as}\ x\rightarrow\pm \infty.
 \end{equation*}
 It is known (cf. \cite{DT79}) that if $\left<x\right>V(x)\in L^1(\R)$, the Jost functions are uniquely defined for all $\lambda, x\in\R$. We denote by $W(\lambda)$ their Wronskian
 \begin{equation*}
     W(\lambda):=f_+(\lambda,x)f'_-(\lambda,x)-f'_+(\lambda,x)f_-(\lambda,x).
 \end{equation*}
\begin{definition}\label{defin of generic and excep}
We say that $V$ is of generic type if $W(0)\neq0$ and is of exceptional type if $W(0)=0$. We also say that zero is a resonance of $H$ if the potential $V$ is of exceptional type. 
\end{definition}
 Furthermore, Jensen and Nenciu \cite{JN01} pointed out that zero is a resonance of $H$ if and only if there exists a unique (up to a constant factor) non-zero $\phi\in L^{\infty}(\R)$ such that $H\phi=0$ in the distributional sense. Consequently, the trivial potential $V\equiv0$ is of exceptional type.

Moreover, recall that $f\in{\rm BMO}:={\rm BMO}(\R)$ (Bounded Mean Oscillation space) if $f\in L^1_{loc}(\R)$ and $$\|f\|_{{\rm BMO}}:=\sup_{I}\frac{1}{|I|}\int_{I}{}|f(x)-f_{I}|dx<\infty,\quad f_{I}=\frac{1}{|I|}\int_{I}{}f(x)dx,$$
where the supremum takes over all bounded intervals, and $f\in\HH^1:=\HH^1(\R)$ (Hardy space) if $f$ is a tempered distribution and $$\|f\|_{\HH^1}:=\int_{\R}{}\sup_{t>0}|(f*\varphi_{t})(x)|dx<\infty,\quad \varphi_{t}(x)=t^{-1}\varphi(x/t)$$
for some Schwartz\  function $\varphi$ with $\int_{\R}\varphi(x)dx=1$. It is known that the dual space $(\HH^1)^*={\rm BMO}$ (cf.\cite{Fef71}).
    
We write $a\lesssim b$ if $a\leq cb$ for some constant $c>0$ with $a,b\in\R^{+}$. Let $\B(X,Y)$ denote the space of all bounded linear operators from $X$ to $Y$, and abbreviate $\B(X,X)$ as $\B(X)$ when $X=Y$. For brevity, we also abbreviate $L^{p}(\R)$ as $L^p$. Our main results are the following.
\begin{theorem}\label{thm of Main reult}
Let $H=-\Delta+V(x)$ be the Schr\"odinger operator on $\R$ and $V(x)$ be real-valued satisfying $|V(x)|\lesssim\left<x\right>^{-\beta}$ for some $\beta>2$. Then the following statements hold.
\begin{itemize}
\item[{\rm(i)}] If  $V$ is of generic type and $\beta>7$, then $W_{\pm}\notin\B(L^1)\cup\B(L^{\infty})$ but $W_{\pm}\in\B(L^{\infty},{\rm BMO})$. 
 \item[{\rm(ii)}] Assume in addition that $V$ is compactly supported. If $V$ is of exceptional type and $\lim\limits_{x\rightarrow-\infty}f_+(0,x)\neq1$, then $W_{\pm}\notin\B(L^{\infty},{\rm BMO})\cup\B(L^1)$.
\end{itemize}
 \end{theorem}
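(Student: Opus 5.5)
By the relation $W_{+}f=\overline{W_{-}\overline{f}}$ it suffices to treat $W_{-}$. We write $W_{-}=I-\frac{1}{\pi i}(W^L_{-}+W^H_{-})$. Since $I$ and $W^H_{-}$ are bounded on every $L^p$, $1\le p\le\infty$, and $L^\infty\subset{\rm BMO}$, every statement reduces to the corresponding one for $W^L_{-}$. We also use duality. The adjoint of $W_{-}$ is $W_{-}^*=\Omega$, and its low-energy part has the same structure with the roles of $R_V$ and $R_0$ interchanged. Hence $W_{-}\notin\B(L^1)$ follows if the $L^\infty$ unboundedness is shown for an operator of the same type, and $W_-\in\B(L^\infty,{\rm BMO})$ is equivalent to $W_-^*\in\B(\HH^1,L^1)$.

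\textbf{Step 1 (low-energy expansion).} We expand $R^+_V(\lambda^2)$ near $\lambda=0$ via the symmetric resolvent identity $R_V=R_0-R_0vM(\lambda)^{-1}vR_0$, where $v=|V|^{1/2}$, $U={\rm sgn}\,V$ and $M(\lambda)=U+vR_0^+(\lambda^2)v$. The free kernel is $R_0^\pm(\lambda^2)(x,y)=\pm\frac{i}{2\lambda}e^{\pm i\lambda|x-y|}$, so $M(\lambda)=\frac{i}{2\lambda}P+T_0+O(\lambda)$. Here $P$ is the rank-one projection onto $v$ (up to normalization), and the number of terms one can keep is limited by $\beta$. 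We then invert $M(\lambda)$ using the Jensen--Nenciu scheme. In the generic case $M(\lambda)^{-1}=QD_0Q+\lambda(\cdots)+O(\lambda^2)$, with $Q=I-P$. In the exceptional case an extra term $\lambda^{-1}$-type involving the resonance function $\phi$ appears; for compactly supported $V$ the resonance function and all remainders are explicit. The hypothesis $\beta>7$ should be exactly what makes the generic remainder terms regular enough.

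\textbf{Step 2 (kernel of $W^L_-$).} Inserting the expansion, $W^L_-$ becomes a finite sum of operators with explicit kernels plus a remainder. The kernels have the form $\int_0^\infty\chi(\lambda)e^{i\lambda(\pm|x-x_1|\pm|y-y_1|)}a(\lambda)\,d\lambda$, integrated against $v(x_1)\Gamma(x_1,y_1)v(y_1)$. We bound the remainder terms by $L^1$-admissible kernels (Schur test, integrating by parts in $\lambda$), so they lie in $\B(L^p)$ for all $p$. The leading terms produce kernels that behave like $\frac{c(x,y)}{x-y}$ or $\frac{c(x,y)}{x+y}$ for large $|x|,|y|$, with coefficients depending only on the signs of $x$ and $y$. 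These coefficients are combinations of $T(0)-1$ and $R_j(0)$, or equivalently of the projections of $\phi$ onto $v$. The result is a sum of truncated Hilbert transforms and reflected Hilbert transforms $\frac{1}{x+y}$ acting on half-lines.

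\textbf{Step 3 (counterexamples and BMO).} For $L^\infty$ we test the leading part on $f=\mathbf 1_{[0,N]}$ or $\mathbf 1_{[-N,0]}$, evaluated at a point $x$ near $0$ or near $N$. The Hilbert pieces produce $\log N$ growth. We then show that the total logarithmic coefficient, a specific linear combination of the constants from Step 2, is nonzero; this is where $W(0)\ne0$, respectively $\lim_{x\to-\infty}f_+(0,x)\neq1$, is used. For the positive result in (i), we note that in the generic case the half-line pieces form a Calder\'on--Zygmund kernel on all of $\R$ with standard smoothness, so the leading part is bounded $L^\infty\to{\rm BMO}$, while the remainder is $L^\infty$-bounded. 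For (ii), the exceptional expansion leaves a piece whose kernel is a Hilbert-type kernel cut by a sharp spatial jump (e.g. $\mathbf 1_{x>0}\frac{1}{x-y}$ minus its reflection) with unequal coefficients on the two half-lines. Applied to $f=\mathbf 1_{[0,\infty)}$ truncated at large $N$, it produces $\log|x|$ with different constants on the two sides, perhaps a $\log|x|\,{\rm sgn}(x)$-type function. Such a function has unbounded mean oscillation on intervals $[-r,r]$ as $r\to\infty$, which yields $W_-\notin\B(L^\infty,{\rm BMO})$; the $L^1$ failure follows by duality.

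\textbf{Main obstacle.} I expect the hardest part to be showing that the coefficient of the singular term does not vanish, i.e. excluding cancellation among the several Hilbert-type contributions. Concretely, this means identifying the constants from the $M(\lambda)^{-1}$ expansion with $T(0)$ and $R_j(0)$, and showing that the combination is nonzero precisely under the stated hypotheses. I would compute these constants through the zero-energy solutions of $H\phi=0$ on each side of ${\rm supp}\,V$, where they are affine functions, using compact support in (ii).
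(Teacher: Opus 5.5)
Your architecture matches the paper's: low/high energy split, Jensen--Nenciu expansion of $M^{-1}(\lambda)$, Hilbert-type leading kernels tested on indicators. However, the decisive steps are missing or incorrect.

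The clearest gap is the $L^1$ statement in (ii). You write that once $W_-\notin\B(L^{\infty},{\rm BMO})$ is known, ``the $L^1$ failure follows by duality''. It does not. Duality turns that statement into $W_-^*\notin\B(\HH^1,L^1)$, which concerns the adjoint, whereas $W_-\notin\B(L^1)$ is equivalent to $W_-^*\notin\B(L^{\infty})$. You also cannot get the latter by running ``the same argument'' on $W_-^*$, as your first paragraph suggests. Since $W_-\in\B(\HH^1,L^1)$ (Remark~\ref{rem of bounds of Wpm}), $W_-^*\in\B(L^{\infty},{\rm BMO})$, so the adjoint does not share the structure that breaks BMO. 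A separate computation on $W_-$ itself is needed. The paper tests $W^L_-$ on $f_N=\chi_{[-N,N]}$, with ${\rm supp}\,V\subseteq[-N+1,N-1]$. The pieces $K_{-1}$ and $K_{02}$ have $g^-$-type leading kernels and map $f_N$ into $L^1$, because $\frac{1}{|x|+|y|}-\frac{1}{|x|-|y|}=O(|y|x^{-2})$ for bounded $y$. The remaining singular part $K_{01}+K_{P_1}+K_{P_2}$ has leading kernel $\big(-\frac{c\overline{c_1}a(x)}{2}+c|c_1|^2-\frac12\big)g^+(x,y)$. Since $a(x)=2c_2\,{\rm sgn}\,x$ for $|x|\ge N$, the coefficient equals $-cc_2\overline{(c_1+c_2)}$ for $x\ge N+2$. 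It is nonzero because $c=(2(|c_1|^2+|c_2|^2))^{-1}$, $c_2\neq0$ and $c_1+c_2=\lim_{x\to-\infty}\phi(x)\neq0$, and it produces a non-integrable $1/x$ tail. In (i) your adjoint route does work, since $K_{P_1}^*$ has leading kernel $-\frac12 g^-$, but testing $K_{P_1}$ directly on $\chi_{[-r,r]}$ is simpler.

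More broadly, the non-cancellation you defer to the ``main obstacle'' is the actual content of the theorem. Routing it through $T(0)$, $R_j(0)$ and Jost asymptotics is exactly what the paper avoids; instead it comes out of the explicit expansion.

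In the generic case the only non-integrable term is $\lambda P_1$ with $P_1=-2i\alpha P$. This gives $K_{P_1}=-\frac12 g^++O(\langle|x|-|y|\rangle^{-2})$ with a universal coefficient. So $W(0)\neq0$ is not what makes the coefficient nonzero; it only ensures $S_1=0$. Note also that $K_{P_1}\chi_{[0,N]}$ is bounded near $x=0$, since $g^+(x,y)=O(|x|y^{-2})$ there. You must evaluate near $x=N$ instead.

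In the exceptional case only $K_{-1}$ fails to map $L^\infty$ into ${\rm BMO}$. Its leading kernel is $c|c_2|^2\,{\rm sgn}\,x\,{\rm sgn}\,y\,g^-(x,y)$ for large $|x|,|y|$. Once this is known, your ${\rm sgn}(x)\log|x|$ heuristic on $\chi_{[0,N]}$ does give the BMO failure; the paper instead tests $K_{-1}^*$ on $h_N\in\HH^1$.

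Finally, your justification of $W_\pm\in\B(L^{\infty},{\rm BMO})$ in (i) is wrong as stated. Away from $|x|=|y|$ one has $g^+(x,y)={\rm sgn}(x)\big(\frac{1}{x-y}+\frac{1}{x+y}\big)$. This is singular on the anti-diagonal and carries a sign jump, so it is not a standard Calder\'on--Zygmund kernel on $\R$. Indeed $f\mapsto{\rm sgn}(x)\HH f$ alone is not $L^{\infty}\to{\rm BMO}$ bounded (test $\chi_{[0,N]}$ near $0$). The combination works only because $\HH(I+\tau)f$ is odd, i.e.\ the kernel is Calder\'on--Zygmund in $(|x|,|y|)$. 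That is what Lemma~\ref{C-Z lemma} supplies.
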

\begin{remark}\label{rema of main theo}
{\rm Further remarks 
 on Theorem \ref{thm of Main reult} are given as follows.
\begin{itemize}
\item We can explicitly construct a class of potentials satisfying the above conditions (ii). For instance, take $0<\phi\in C^{\infty}(\R)$ satisfying $\phi(x)=b$ for $x>1$ and $\phi(x)=a$ for $x<-1$ ($b\neq a$) and let $V=\frac{\Delta\phi}{\phi}$. It is clear that $ V\in C^{\infty}_0(\R)$, $H\phi=0$, and thus zero is a resonance of $H$. Moreover, in this case the Jost function $f_+(0,x)=\frac{1}{b}\phi(x)$ and satisfies $\lim\limits_{x\rightarrow-\infty}f_+(0,x)\neq1$.
\vskip0.1cm
\item Recall that $W_{\pm}\in\mathcal{B}(L^{p})$ for all $1<p<\infty$ in both generic and exceptional cases (cf. \cite{DF06,GY00,Wed99}), and Weder further proved that $W_{\pm}\in\B(L^1)\cap\B(L^{\infty})$ in the exceptional case with
$\lim\limits_{x\to-\infty}f_{+}(0,x)=1$. Together with these known results, our
counterexamples complete the picture of the $L^{p}$ boundedness of
one-dimensional wave operators. 
\item For both the generic case and the exceptional case with
$\lim\limits_{x\to-\infty}f_{+}(0,x)\neq1$, the wave operators are unbounded on
$L^{1}(\R)$ and $L^{\infty}(\R)$, yet a result of Weder shows that they are bounded
from $\mathcal{H}^{1}(\R)$ to $L^{1}(\R)$. However, the two cases differ in the boundedness from 
$L^{\infty}(\R)$ to $\mathrm{BMO}(\R)$. We prove that $W_{\pm}\in\B(L^{\infty},{\rm BMO})$ in the generic case, while $W_{\pm}\notin\B(L^{\infty},{\rm BMO})$ in the
exceptional case.
\end{itemize}
}
\end{remark}

\begin{remark}\label{background}
{\rm {\bf(The $L^p$ boundedness of wave operators)} We remark that an important property of wave operators is the following {\textit{intertwining property}}
\begin{equation*}
f(H)P_{ac}(H)=W_{\pm}f(-\Delta)W^*_{\pm}
\end{equation*}
which allows one to reduce $L^p-L^q$ estimates for the perturbed operator $f(H)$ to the corresponding estimates for the free operator $f(-\Delta)$:
\begin{equation*}
\|f(H)P_{ac}(H)\|_{L^p\rightarrow L^q}\leq\|W_{\pm}\|_{L^q\rightarrow L^q}\|f(-\Delta)\|_{L^p\rightarrow L^q}\|W^*_{\pm}\|_{L^p\rightarrow L^p},
\end{equation*}
where $f$ is any Borel function on $\R$. Because of such a useful feature, the $L^p$ boundedness of wave operators has been extensively studied. 
It is known that K. Yajima first studied this question in his seminal works \cite{Yaj93,Yaj95a,Yaj95b,Yaj97}
for  Schr\"{o}dinger operators $-\Delta_{\R^d}+V$ on $\R^d$ with $d\geq3$, where he proved that wave operators are $L^p$ bounded for all $1\leq p\leq\infty$ if zero energy is regular. Subsequently, this topic has been developed for lower dimensions, for instance, \cite{Yaj99,JY02,EGG18} for $d=2$ and \cite{Wed99,GY00,DF06,DMW11,Wed22} for $d=1$ and references therein.

Moreover, many studies have extended to higher-order Schr\"{o}dinger operators $(-\Delta_{\R^d})^m+V$ with $m\geq2$. The first work \cite{GG21} by Goldberg and Green established $L^p$ boundedness for $1 < p < \infty$ in the regular case for $(m,d) = (2,3)$, which was subsequently extended to general case $d > 2m$ by Erdo\u{g}an and Green in \cite{EG22,EG23}. Further developments include Mizutani, Wan and Yao's investigation \cite{MWY24a,MWY25} of endpoint behavior and zero energy resonances for $(m,d) = (2,3)$ and their complete analysis \cite{MWY24} of all zero resonance types in $(m,d) = (2,1)$, along with Galtbayar-Yajima's study \cite{GY24} of the $(m,d) = (2,4)$ case. More recently, Erdo\u gan-Green-LaMaster \cite{EGL25} considered the case $d>4m$ while Cheng, Soffer, Wu and Yao \cite{CSWY25,CSWY25b} covered the remaining cases $1 \leq d \leq 4m$.
}    
\end{remark}
\subsection{Organization of the paper}
The remainder of this paper is organized as follows. Section \ref{sec: analysis of Lp bounds of low energy} is devoted to studying the asymptotic behavior of $R^{+}_V(\lambda^2)$ near $\lambda=0$ (Subsection \ref{sec of asy of RV}), and establishing the $L^p$ boundedness of the low energy part $W^L_{-}$, in particular, extracting the singular operators at the endpoints $p=1,\infty$ (Subsection \ref{subsec of lp bounds of low energy}). Based on this analysis, in Section \ref{sec of counterex}, we further show that these singular operators are unbounded by constructing appropriate test functions. Finally, Appendix \ref{proof of asy expansion} provides the detailed proof of asymptotic expansion Lemma \ref{asy expa lemma}, while Appendix 
\ref{subsec of high energy part} presents another proof of the $L^p$ boundedness of the high energy part $W^H_{-}$.

\section{The analysis on the $L^p$ boundedness of the low energy part $W^L_{-}$}\label{sec: analysis of Lp bounds of low energy}
In this section, we analyze the $L^p$ boundedness of the low energy part $W^L_{-}$, which is given by 
\begin{equation}\label{expre of low energy}
W^{L}_{-}=\int_{0}^{+\infty}\lambda\chi(\lambda)R^{+}_{V}(\lambda^{2})V(R^+_{0}-R^-_{0})({\lambda}^{2})d\lambda,
\end{equation}
where $R^{\pm}_{0}(\lambda)$ and $R^{\pm}_{V}(\lambda)$ are boundary values of free and perturbed resolvents on $(0,+\infty)$, respectively, i.e.,
 $$R^{\pm}_{0}(\lambda)=\lim_{\varepsilon\downarrow0}(-\Delta-(\lambda\pm i\varepsilon))^{-1},\quad R^{\pm}_{V}(\lambda)=\lim_{\varepsilon\downarrow0}(H-(\lambda\pm i\varepsilon))^{-1},\quad \lambda>0.$$
 It is known (cf.\cite{Kur73,Agm75}) that such limits exist in the operator norm of $\B(L^2_{s}(\R),L^2_{-s}(\R))$ for $s>\frac{1}{2}$ for both free and perturbed cases under our assumption on $V$. Here 
 $L^2_s(\R):=\{f:\left<x\right>^sf(x)\in L^2(\R)\}.$ Moreover, the kernel of $R^{\pm}_0(\lambda^2)$ is given by (see e.g. \cite{JN01})   
\begin{equation}\label{kernel of free resol}
R^{\pm}_{0}(\lambda^{2},x,y)=\frac{\pm i}{2\lambda}e^{\pm i\lambda|x-y|},\quad \lambda>0,\quad x,y\in\R.
\end{equation}

 Therefore, it is crucial to investigate the asymptotic behavior of $R^+_V(\lambda^2)$ near $\lambda=0$.
\subsection{Asymptotic behavior of $R^+_V(\lambda^2)$ near $\lambda=0$}\label{sec of asy of RV}
To this end, we introduce 
$$M(\lambda):=U+vR^{+}_{0}(\lambda^{2})v,$$
where $U$ and $v$ are the multiplication operators by $U(x)$ and $v(x)$, respectively. Here $U(x)={\rm sgn}~V(x)$, namely, $U(x)=1$ if $V(x)\geq0$ and $U(x)=-1$ if $V(x)<0$, and $v(x)=\sqrt {|V(x)|}$. Since $H$ has no positive eigenvalues in the interval $(0,+\infty)$, it follows that $M(\lambda)$ is invertible on $L^2(\R)$ for all $\lambda>0$. We denote its inverse by $M^{-1}(\lambda)$.
Moreover, using the resolvent identity
$$R^+_V(\lambda^2)=R_0^{+}(\lambda^2)-R^{+}_V(\lambda^2)VR^+_0(\lambda^2)$$
together with the relations $vUv=V$ and $U^2=I$ (the identity operator), we obtain
\begin{equation}\label{relat between RV and M}
R^{+}_{V}(\lambda^{2})V=R^{+}_{0}(\lambda^{2})vM^{-1}(\lambda)v.
\end{equation}
This indicates that we can turn to study the asymptotic behavior of $M^{-1}(\lambda)$ near $\lambda=0$. 

Before presenting the expansion of $M^{-1}(\lambda)$, we introduce some notation. Let $P$ be the orthogonal projection onto the space spanned by $v$, i.e.,
$$P=\alpha\left<\cdot,v\right>v,\quad \alpha=\|V\|^{-1}_{L^1}.$$
Define $Q:=I-P$. Obviously, $Q$ is the orthogonal projection onto $({\rm span}\{v\})^{\bot}$, and thus it satisfies 
$$Qv=0,\quad \left<Qf,v\right>=0,\quad f\in 
 L^2(\R).$$
Denote $$T_0=U+vG_0v,\quad G_0(x,y)=-\frac{1}{2}|x-y|.$$ 
Let $S_1$ be the orthogonal projection onto the kernel space ${\rm Ker}QT_0Q|_{QL^2(\R)}$, that is,
$$S_1L^2(\R)=\{f\in L^2(\R):\left<f,v\right>=0,\ QT_0f=0\}.$$
Throughout this paper, we denote by $K$ the integral operator with the kernel $K(x,y)$, i.e.,
$$(Kf)(x)=\int_{\R}{}K(x,y)f(y)dy.$$
Moreover, we say that an integral operator $K\in\B(L^2)$ is $absolutely\ bounded$ if its associated absolutely value integral operator $|K|$, defined by the kernel $|K(x,y)|$, is also bounded on $L^2(\R)$. 
\begin{lemma}\label{asy expa lemma}
Let $H=-\Delta+V(x)$ with $|V(x)|\lesssim\left<x\right>^{-\beta}$ for some $\beta>2$. Then $M^{-1}(\lambda)$ has the following asymptotic expansion on $L^2(\R)$ for $0<\lambda<2\lambda_0$.
\begin{itemize}
    \item [{\rm (i)}] If $V$ is of generic type and $\beta>7$, then 
    \begin{equation}\label{asy expa reg}
 M^{-1}(\lambda)=QA^0_0Q+\lambda(QA^0_{11}+A^0_{12}Q)+\lambda{P_1}+\Gamma^0_{2}(\lambda).
\end{equation}
\item [{\rm (ii)}] If $V$ is of exceptional type and $\beta>11$, then
\begin{equation}\label{asy expa res}
 M^{-1}(\lambda)=-icS_1\lambda^{-1}+QA^1_{0}Q+2\alpha c(S_1T_0+T_0S_1)+\lambda\left(QA^1_{11}+A^1_{12}Q\right)+\lambda{P_1}+\lambda P_2+\Gamma^1_{2}(\lambda).
\end{equation}
\end{itemize}
Here 
\begin{itemize}
\item  $ {P_1}=-2i\alpha P,\quad P_2=4i\alpha^2c T_0S_1T_0$ with $c$ defined in \eqref{c,c1,c2};
\vskip0.1cm
\item $A^j_0,A^{j}_{11},A^{j}_{12}\ (j=0,1)$ are $\lambda$-independent bounded operators on $L^2(\R)$; 
\vskip0.1cm
\item $\Gamma^{j}_{2}(\lambda)$ are $\lambda$-dependent bounded operators on $L^2(\R)$ such that all the operators appeared in the right hand sides of \eqref{asy expa reg} and \eqref{asy expa res} are absolutely bounded. Moreover, $\Gamma^{j}_{2}(\lambda)$ satisfies that for $s=0,1,2$,
\begin{equation}\label{estim of gamma}
\|(\partial^{s}_{\lambda}\Gamma^{j}_{2})(\lambda)\|_{\B(L^2)}\leq C_s\lambda^{2-s},\quad 0<\lambda<2\lambda_0,\quad j=0,1.
\end{equation}
\end{itemize}
\end{lemma}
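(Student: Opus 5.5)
We will follow the Jensen--Nenciu symmetric resolvent approach, adapted to the one-dimensional setting. The starting point is the Taylor expansion of the free resolvent kernel. Write
\[
R_0^+(\lambda^2,x,y)=\frac{i}{2\lambda}e^{i\lambda|x-y|}=\frac{i}{2\lambda}+G_0(x,y)+\lambda G_1(x,y)+\lambda^2 G_2(x,y)+\cdots,
\]
where $G_1(x,y)=-\frac{i}{4}|x-y|^2$. Each remainder is bounded by the appropriate power of $\lambda|x-y|$, and this bound persists after up to two $\lambda$-derivatives. Since $\frac{i}{2\lambda}v\langle\cdot,v\rangle=\frac{i}{2\alpha\lambda}P$, this gives
\[
M(\lambda)=\frac{i}{2\alpha\lambda}P+T_0+\lambda vG_1v+\lambda^2vG_2v+\mathcal{E}(\lambda).
\]
The decay $|V|\lesssim\langle x\rangle^{-\beta}$ with large $\beta$ makes $v|x-y|^kv$ Hilbert--Schmidt, hence absolutely bounded, for the needed $k$. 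It also yields $\|\partial_\lambda^s\mathcal{E}(\lambda)\|\lesssim\lambda^{3-s}$ for $s=0,1,2$. The thresholds $\beta>7$ and $\beta>11$ are chosen precisely so that enough moments exist for the expansion orders needed in each case.

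Next we invert $M(\lambda)$ using the Feshbach/Grushin (or Jensen--Nenciu) lemma relative to the splitting $P\oplus Q$. Set $\widetilde M(\lambda)=\frac{2\alpha\lambda}{i}M(\lambda)=P+\lambda\cdot\frac{2\alpha}{i}T_0+O(\lambda^2)$. By the Jensen--Nenciu lemma, $\widetilde M(\lambda)$ is invertible if and only if the Schur complement $B(\lambda)=\frac{1}{\lambda}Q\bigl(\cdot\bigr)Q$ is invertible on $QL^2$, and the leading term of $B(\lambda)$ is a multiple of $QT_0Q$.

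\textbf{Generic case.} Here $W(0)\neq0$ means that $QT_0Q$ is invertible on $QL^2$, i.e.\ $S_1=0$. We set $D_0=(QT_0Q)^{-1}$ and expand the Neumann series to first order. This produces $QA^0_0Q$ with $A_0^0=D_0$. The $\lambda$-terms take the form $QA^0_{11}+A^0_{12}Q$, built from $D_0$, $T_0$, $P$ and $vG_1v$. Every $O(\lambda)$ contribution that does not begin or end with $Q$ comes from the $P$-block of the inverse. That block equals $\frac{\lambda}{c'}P$ to leading order, and the explicit computation identifies it as $-2i\alpha\lambda P=\lambda P_1$. Absolute boundedness holds because each term is a finite product of absolutely bounded operators with finite-rank or Hilbert--Schmidt factors. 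The derivative bounds \eqref{estim of gamma} for $\Gamma^0_2$ follow by differentiating the Neumann series and invoking the remainder bounds from the first step.

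\textbf{Exceptional case.} Now $S_1\neq0$. We must first establish that $\operatorname{rank}S_1=1$, i.e.\ that $S_1=\langle\cdot,\phi_0\rangle\phi_0$ with $\phi_0=Uv\phi$, where $\phi$ is the bounded resonance function of Jensen--Nenciu. Since $QT_0Q+S_1$ is invertible on $QL^2$, we set $D_0=(QT_0Q+S_1)^{-1}$ and apply the Jensen--Nenciu lemma a second time. The inner Schur complement is $\lambda$ times $S_1(vG_1v-T_0PT_0\cdot\text{const})S_1$, a scalar multiple $c^{-1}S_1$ with $c\neq0$. Nondegeneracy of this scalar (the paper's constant $c$ in \eqref{c,c1,c2}) is the key input, and we intend to derive it from an explicit identity: $\langle vG_1v\phi_0,\phi_0\rangle$ together with $|\langle T_0\phi_0,v\rangle|^2$ combine into a nonzero expression such as $(\int V\phi)^2$ or the asymptotic values of $\phi$ at $\pm\infty$. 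This second inversion yields the singular term $-icS_1\lambda^{-1}$. Collecting terms up to order $\lambda$ then gives the constant corrections $2\alpha c(S_1T_0+T_0S_1)$, which come from the cross terms with the $P$-block, and the new $\lambda$-term $P_2=4i\alpha^2cT_0S_1T_0$. Since the leading term is of size $\lambda^{-1}$, we must expand $M$ to order $\lambda^3$ in order to obtain an $O(\lambda^2)$ remainder; this is the reason for the stronger decay $\beta>11$.

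We expect the main obstacle to be the exceptional case bookkeeping. This means exhibiting $c$ explicitly and proving $c\neq0$, and tracking all terms so that the coefficients $2\alpha c$ and $4i\alpha^2c$ come out exactly. It also means verifying that the second-order remainder, including its derivatives, is $O(\lambda^{2-s})$ despite the division by $\lambda$ in the nested Schur complements.
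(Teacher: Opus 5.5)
Your proposal is essentially the paper's proof. It uses the same Taylor expansion of $R_0^+(\lambda^2)$, the same rescaling $\widetilde M(\lambda)=P+\frac{2\alpha}{i}\lambda T_0+\cdots$, and the Jensen--Nenciu inversion lemma applied first relative to $Q$ and then relative to the rank-one $S_1$, with inner Schur complement $\lambda S_1\bigl(vG_1v+2i\alpha T_0PT_0\bigr)S_1+O(\lambda^2)=\frac{i}{c}\lambda S_1+O(\lambda^2)$. The only difference is that the paper cites Jensen--Nenciu (Lemma 5.4) for $\dim S_1L^2\le 1$ and for this identity rather than deriving them.

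Two cautions if you derive the identity yourself:
\begin{itemize}
\item $\int V\phi$ is a multiple of $\langle \omega,v\rangle=0$, so $(\int V\phi)^2$ is not a usable candidate. Nondegeneracy instead comes from $\langle vG_1v\omega,\omega\rangle=2i|c_2|^2$ and $2i\alpha\langle T_0PT_0\omega,\omega\rangle=2i|c_1|^2$, together with $|c_1|^2+|c_2|^2=\frac12\bigl(|\phi(+\infty)|^2+|\phi(-\infty)|^2\bigr)>0$.
\item In the exceptional case you must keep the $\lambda^3 vG_3v$ term, so that the remainder of $M(\lambda)$ is $O(\lambda^4)$ (the paper's choice $N=3$). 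The $O(\lambda^3)$ remainder of your first step is not enough there.
\end{itemize}
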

\begin{remark}
{\rm
(1) It is known that such an expansion was previously considered by Jensen and Nenciu in \cite{JN01}, but the explicit form of the coefficients was not provided. For our discussion of the $L^p$ boundedness of $W^L_{-}$, a more precise expansion is needed. Based on their ideas, we compute these coefficients in detail in Appendix \ref{proof of asy expansion}.

(2) We remark that the orthogonal projections $Q$ and $S_1$ in the expansions can eliminate the singularity of free resolvent at $\lambda=0$ (see Lemma \ref{cancel lemma} below), which plays a crucial role in establishing the $L^p$ boundedness of $W^L_{-}$.
Furthermore, we point out that if $V$ is of exceptional type, then $S_1$ is a rank-one operator (see Remark \ref{rema of S1 and rela betwe c2 and lim}), which greatly facilitates our subsequent discussion of boundedness at the endpoints $p=1,\infty$. 
}
\end{remark}
\subsection{The $L^p$ boundedness of $W^L_{-}$}\label{subsec of lp bounds of low energy} With Lemma \ref{asy expa lemma}, we can derive the following $L^p$ boundedness of $W^L_{-}$.
\begin{theorem}\label{theorem of lp bounds of low energy} 
Let $H=-\Delta+V$ with $|V(x)|\lesssim \left<x\right>^{-\beta}$ for some $\beta>2$. Then the following statements hold.    \begin{itemize}
    \item [{\rm(i)}] If $V$ is generic type and $\beta>7$, then 
\begin{equation}\label{deco of regular}
W^{L}_{-}=\sum\limits_{A\in\Omega_0 }K_{A}+K_{{P_1}}
\end{equation}
and $\{K_A:A\in\Omega_0\}\subseteq\B(L^p)$ for all $1\leq p\leq\infty$, while 
$$ K_{{P_1}}\in \B(L^p)\cap\B(\HH^1,L^1)\cap\B(L^{\infty},{\rm BMO})\ for\ all\ 1<p<\infty.$$
Therefore, $$W^L_{-}\in\B(L^p)\cap\B(\HH^1,L^1)\cap\B(L^{\infty},{\rm BMO})\ for\ all\ 1<p<\infty.$$ 
\item [{\rm(ii)}] If $V$ is exceptional type and $\beta>11$, then 
\begin{equation}\label{deco of resonance}
W^{L}_{-}=\sum\limits_{A\in\Omega_1}K_A+K_{-1}+\sum\limits_{j=1}^{2}(K_{0j}+K_{{P_{j}}})   
\end{equation}
and $\{K_A:A\in\Omega_1\}\subseteq\B(L^p)$ for all $1\leq p\leq\infty$, while  
\begin{align*}
K_{-1}&\in\B(L^p)\cap\B(\HH^1,L^1)\ for\ all\ 1<p<\infty,\\
\{K_{01},K_{02},K_{{P_1}},K_{{P_2}}\}&\subseteq \B(L^p)\cap\B(\HH^1,L^1)\cap\B(L^{\infty},{\rm BMO})\ for\ all\ 1<p<\infty.
\end{align*}
Therefore,  $$W^L_{-}\in\B(L^p)\cap\B(\HH^1,L^1)\ for\ all\ 1<p<\infty.$$ 
Here $\Omega_j=\{QA^j_0Q,\lambda QA^j_{11},\lambda A^j_{12}Q, \Gamma^j_2(\lambda)\}$ and
\begin{align}\label{kernels in low energy}
\begin{split}
K_A(x,y)&=\int_{0}^{+\infty}\lambda\chi(\lambda)\big[R^+_0(\lambda^{2})vAv(R^+_0-R^-_0)(\lambda^{2})\big](x,y)d\lambda,\quad A\in\Omega_j,\quad j=0,1,\\
K_{-1}(x,y)&=-ic\int_{0}^{+\infty}\chi(\lambda)\big[R^+_0(\lambda^{2})vS_1v(R^+_0-R^-_0)(\lambda^{2})\big](x,y)d\lambda,\\
K_{01}(x,y)&=2\alpha c\int_{0}^{+\infty}\lambda\chi(\lambda)\big[R^+_0(\lambda^{2})vS_1T_0v(R^+_0-R^-_0)(\lambda^{2})\big](x,y)d\lambda,\\
K_{02}(x,y)&=2\alpha c\int_{0}^{+\infty}\lambda\chi(\lambda)\big[R^+_0(\lambda^{2})vT_0S_1v(R^+_0-R^-_0)(\lambda^{2})\big](x,y)d\lambda,\\
K_{P_{j}}(x,y)&=\int_{0}^{+\infty}\lambda^2\chi(\lambda)\big[R^+_0(\lambda^{2})vP_jv(R^+_0-R^-_0)(\lambda^{2})\big](x,y)d\lambda,\quad j=1,2.
\end{split}
\end{align}
\end{itemize}
\end{theorem}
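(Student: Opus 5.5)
Substituting the expansions \eqref{asy expa reg} and \eqref{asy expa res} of Lemma~\ref{asy expa lemma} into \eqref{relat between RV and M} and then into \eqref{expre of low energy} gives the decompositions \eqref{deco of regular} and \eqref{deco of resonance} term by term. The work is therefore to bound each kernel in \eqref{kernels in low energy}. Using \eqref{kernel of free resol}, every kernel takes the form
\[
K(x,y)=\int_{\R^2}\int_0^\infty \chi(\lambda)\,\lambda^{k}\,\frac{e^{i\lambda|x-u|}}{\lambda}\,v(u)B(u,w)v(w)\,\frac{\sin(\lambda|w-y|)}{\lambda}\,d\lambda\,du\,dw
\]
up to constants, where $B$ is the relevant operator and $k\in\{-1,0,1,2\}$ ($k=1$ for $A$, $\lambda^2$ absorbed into $\Gamma_2$-type terms). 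The first step is the cancellation lemma (Lemma~\ref{cancel lemma}). Since $Qv=0$ and $S_1\le Q$, a $Q$ on the left lets us replace $e^{i\lambda|x-u|}$ by $e^{i\lambda|x-u|}-e^{i\lambda|x|}$, which is $O(\lambda\langle u\rangle)$. A $Q$ on the right likewise replaces $\sin(\lambda|w-y|)$ by $\sin(\lambda|w-y|)-\sin(\lambda|y|)$. Each $Q$ thus gains one power of $\lambda$ at the cost of moments of $v$, which is where the decay $\beta>7$ or $\beta>11$ enters.

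For $A\in\Omega_j$, this gain makes the $\lambda$-integrand nonsingular. I would then show $\sup_x\int|K_A(x,y)|dy+\sup_y\int|K_A(x,y)|dx<\infty$ and conclude by Schur's test. The $\lambda$-integral is handled by integrating by parts once or twice in $\lambda$ against the phase $e^{i\lambda(|x-u|\pm|w-y|)}$. This yields decay $\langle |x-u|\pm|w-y|\rangle^{-2}$, integrable in $x$ or $y$. Boundary terms at $\lambda=0$ vanish thanks to the extra powers of $\lambda$, and absolute boundedness lets us move absolute values inside the $u,w$ integrals. For $\Gamma_2^j(\lambda)$ I would use \eqref{estim of gamma} with $s\le 2$, writing the free resolvent factors as Taylor remainders so that the two integrations by parts cost at most $\lambda^{-2}$ against the $\lambda^{2}$ gain.

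The remaining terms are $K_{P_1}$, $K_{P_2}$, $K_{0j}$ and $K_{-1}$. In each, the operator has a $P$ on at least one side, or, for $K_{-1}$, the cancellation only exactly compensates the $\lambda^{-1}$. The $\lambda$-integral then becomes, after the change $\lambda\mapsto$ phase,
\[
\int_0^\infty\chi(\lambda)\sin(\lambda t)\,d\lambda\sim \frac1t \quad (|t|\to\infty),
\]
so the kernels behave like $\langle |x|\pm|y|\rangle^{-1}$ times bounded factors. Concretely, I would show each such kernel equals a finite combination of operators $m_1(x)\,\mathcal T\,m_2$. Here $\mathcal T$ is a Calderón--Zygmund operator, essentially a truncated Hilbert transform composed with $\tau$ and smooth cutoffs $\chi_{\{x\gtrless0\}}$, and the $m_i$ are bounded, plus a Schur-bounded remainder. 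A CZ kernel gives $\B(L^p)$ for $1<p<\infty$ and $\B(\HH^1,L^1)$. For the $P$- and $T_0S_1$-type terms, the outer multiplier is constant in $x$ on each half-line, so the composition maps $L^\infty$ to BMO. For $K_{-1}$, however, the left factor $v S_1$ produces $x$-dependent but $\HH^1$-preserving structure only on the input side, so only the $\HH^1\to L^1$ bound survives. This asymmetry is why $\B(L^\infty,{\rm BMO})$ is not claimed for $K_{-1}$.

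The main obstacle is precisely this reduction of the $\lambda^0$-order terms to Hilbert-transform-type operators with a verified CZ kernel. One needs the smoothness estimate $|\nabla K(x,y)|\lesssim|x-y|^{-2}$ uniformly after the $u,w$ integrations, and the error terms must be genuinely Schur bounded. I would first treat $P$ explicitly, since $vPv$ is rank one with kernel $\alpha V(u)V(w)$, separating the regions $x,y>0$, $x,y<0$ and mixed signs. I would then reuse that computation for $S_1$, which is rank one by Remark~\ref{rema of S1 and rela betwe c2 and lim}.
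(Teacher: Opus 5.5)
Your architecture matches the paper's: the cancellation lemma, a Schur test after two integrations by parts for the $O(\lambda)$ terms, and reduction of the $O(1)$ terms to Hilbert-type kernels. It gives $\B(L^p)$, $1<p<\infty$, for the $O(1)$ terms. The gap is in the endpoint bounds for $K_{-1},K_{01},K_{02}$.

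After the cancellation these kernels carry the multipliers $a(x)$ and $\overline{a(y)}$ of \eqref{ax}, which behave like $2c_2\,{\rm sgn}\,x$ for large $|x|$. In your form $m_1(x)\mathcal T m_2$ with $\mathcal T$ Calder\'on--Zygmund, the multipliers break the endpoints:
\begin{itemize}
\item A non-constant input multiplier $m_2$ destroys $\HH^1$: for example ${\rm sgn}(y)(\chi_{[-1,0]}-\chi_{[0,1]})=-\chi_{[-1,1]}\notin\HH^1$.
\item A non-constant output multiplier $m_1$ destroys BMO: for example $\log|x|\in{\rm BMO}$ but ${\rm sgn}(x)\log|x|\notin{\rm BMO}$.
\end{itemize}
So neither $\B(\HH^1,L^1)$ for $K_{-1},K_{02}$ nor $\B(L^\infty,{\rm BMO})$ for $K_{01}$ follows.

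Your criterion ``outer multiplier constant on each half-line, hence $L^\infty\to{\rm BMO}$'' is false in general. It also cannot separate $K_{01}$ from $K_{-1}$: both carry $a(x)$ on the output side, yet Section \ref{sec of counterex} shows $K_{-1}\notin\B(L^\infty,{\rm BMO})$ when $c_2\neq0$.

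Two ingredients are missing.
\begin{itemize}
\item[(a)] The precise endpoint behaviour of the model kernels in Lemma \ref{C-Z lemma}. Up to constants, $g_1={\rm sgn}(x)g^+$ is essentially $\HH(I+\tau)$ and has both endpoint bounds. By contrast, $g_2={\rm sgn}(x){\rm sgn}(y)g^-$ is essentially $-{\rm sgn}(x)\HH(I-\tau)$, i.e.\ ${\rm sgn}(x)$ times an even BMO function, hence only $\HH^1\to L^1$.
\item[(b)] A representation of each singular kernel as an $L^1$-weighted superposition $\int w(\Theta)\,k(x-\theta_1u_1,y-\theta_2u_2)\,d\Theta$ of translates of such a model kernel. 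Here $k_{-1}=ig_2+O(\langle|x|-|y|\rangle^{-2})$, $k_{01}=g_1+\cdots$ and $k_{02}=\overline{g_1(y,x)}+\cdots$. This comes from applying Lemma \ref{cancel lemma} \emph{before} extracting the $\lambda$-asymptotics, so the sign factors sit at the translated points. Translation invariance of the $\HH^1$, $L^1$, $L^\infty$ and BMO norms, plus Minkowski, then transfers the bounds, and this is what absorbs $a(x)$ and $\overline{a(y)}$.
\end{itemize}
For the same reason, treating $vPv$ quadrant by quadrant loses both endpoints unless you recombine the pieces into the global forms $g^+\approx{\rm sgn}(x)\HH(I+\tau)$ and $g^-f\approx-\HH(I-\tau)({\rm sgn}\,f)$.

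A second step fails as stated: the remainder $\Gamma^j_2(\lambda)$. It is not surrounded by $Q$ or $S_1$, so nothing cancels the two factors $\lambda^{-1}$ from the free resolvents. The integrand $\lambda\chi(\lambda)\lambda^{-2}\Gamma^j_2(\lambda)\cdots$ is therefore only $O(\lambda)$. By \eqref{estim of gamma} its second $\lambda$-derivative is $O(\lambda^{-1})$, which is not integrable at $\lambda=0$. Your count (two integrations by parts cost $\lambda^{-2}$ against the $\lambda^2$ gain) is off by one power. The paper cuts $\chi$ dyadically, proves $|K^\pm_{r,N}|\lesssim\min\{2^{2N},\langle|x|-|y|\rangle^{-2}\}$, and sums after interpolation to get $\langle|x|-|y|\rangle^{-3/2}$. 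Splitting at $\lambda\sim\langle|x|\pm|y|\rangle^{-1}$ would also work, with a logarithmic loss.

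Moreover, \eqref{estim of gamma} controls $\partial^s_\lambda\Gamma^j_2$ only in $\B(L^2)$. You should therefore integrate by parts against the $(u,w)$-independent phase $e^{i\lambda(|x|\pm|y|)}$, keeping the $(u,w)$-integral as an $L^2$ pairing, rather than moving absolute values inside.

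A minor correction: $(R^+_0-R^-_0)(\lambda^2)(x,y)=\frac{i}{\lambda}\cos(\lambda|x-y|)$, not a sine.
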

\begin{remark}\label{rem of bounds of Wpm}
{\rm As a consequence of this theorem and Theorem \ref{bounds of high energy}, we obtain 
\begin{itemize}
\item If $V$ is generic type and $\beta>7$, then $W_{\pm}\in\B(L^p)\cap\B(\HH^1,L^1)\cap\B(L^{\infty},{\rm BMO})$ for all $1<p<\infty$.
\vskip0.1cm
\item If $V$ is exceptional type and $\beta>11$, then $W_{\pm}\in\B(L^p)\cap\B(\HH^1,L^1)$ for all $1<p<\infty$.
\end{itemize} 
}
\end{remark}
To derive Theorem \ref{theorem of lp bounds of low energy}, we first combine \eqref{expre of low energy} and \eqref{relat between RV and M} to rewrite  \begin{equation*}
W^{L}_{-}=\int_{0}^{+\infty}\lambda\chi(\lambda)R^{+}_{0}(\lambda^{2})vM^{-1}(\lambda)v(R^+_{0}-R^-_{0})({\lambda}^{2})d\lambda.
\end{equation*} 
Substituting 
\eqref{asy expa reg} and \eqref{asy expa res} into this expression immediately yields \eqref{deco of regular} and \eqref{deco of resonance}, respectively. Hence, the key step is to establish the boundedness stated above. For this purpose, we present the following cancellation lemma.
\begin{lemma}\label{cancel lemma}
Let $\lambda>0$. Then for any $T=Q,S_1$, we have
\begin{equation*}
 [R^{\pm}_0(\lambda^2)vTf](x)=\int_{\R}\mcaR^{\pm}(\lambda,x,y)(Tf)(y)dy,\quad\ [TvR^{\pm}_0(\lambda^2)f](x)=T(\mcaF^{\pm}(\lambda,x)),
\end{equation*}
where 
\begin{equation*}
\mcaR^{\pm}(\lambda,x,y)=\frac{1}{2}yv(y)\int_{0}^{1}{\rm sgn}(x-\theta y)e^{\pm i \lambda|x-\theta y|}d\theta,\quad \mcaF^{\pm}(\lambda,x)=\int_{\R}\mcaR^{\pm}(\lambda,y,x)f(y)dy.
\end{equation*}
\end{lemma}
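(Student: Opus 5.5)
The plan is to use the single property shared by $T=Q$ and $T=S_1$: in both cases $TL^2\subseteq({\rm span}\{v\})^{\perp}$. This gives $\langle Tf,v\rangle=0$ for all $f$, and $Tv=0$ because $T$ is an orthogonal projection vanishing on $v$ (for $S_1$, note that $S_1\le Q$). The idea is then to subtract a quantity that is killed by $T$ and apply the fundamental theorem of calculus in the variable $\theta$.

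\textbf{First identity.} By \eqref{kernel of free resol},
\[
[R^{\pm}_0(\lambda^2)vTf](x)=\frac{\pm i}{2\lambda}\int_{\R}e^{\pm i\lambda|x-y|}v(y)(Tf)(y)\,dy .
\]
Since $\int v(y)(Tf)(y)\,dy=\langle Tf,v\rangle=0$, I may subtract $e^{\pm i\lambda|x|}$ inside the integral without changing its value. Next, write
\[
e^{\pm i\lambda|x-y|}-e^{\pm i\lambda|x|}=\int_0^1\frac{d}{d\theta}e^{\pm i\lambda|x-\theta y|}\,d\theta .
\]
For fixed $x$ and $y$, the map $\theta\mapsto|x-\theta y|$ is Lipschitz with a.e. derivative $-y\,{\rm sgn}(x-\theta y)$. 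Hence the integrand equals $\mp i\lambda y\,{\rm sgn}(x-\theta y)e^{\pm i\lambda|x-\theta y|}$, and the fundamental theorem of calculus applies because the function is absolutely continuous. Multiplying by $\frac{\pm i}{2\lambda}$ gives $\frac{\pm i}{2\lambda}\cdot(\mp i\lambda)=\frac12$. Therefore the kernel becomes $\frac12 yv(y)\int_0^1{\rm sgn}(x-\theta y)e^{\pm i\lambda|x-\theta y|}d\theta=\mcaR^{\pm}(\lambda,x,y)$, which is the claimed formula.

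\textbf{Second identity.} Write $vR^{\pm}_0(\lambda^2)f(x)=\frac{\pm i}{2\lambda}v(x)\int e^{\pm i\lambda|y-x|}f(y)\,dy$. Because $Tv=0$, the term $\frac{\pm i}{2\lambda}v(x)\int e^{\pm i\lambda|y|}f(y)\,dy$ is a constant multiple of $v$. Subtracting it therefore leaves $T(vR^\pm_0f)$ unchanged. Applying the same $\theta$-computation with the roles of $x$ and $y$ swapped, the remaining function of $x$ is
\[
\int\frac12 xv(x)\int_0^1{\rm sgn}(y-\theta x)e^{\pm i\lambda|y-\theta x|}d\theta\, f(y)\,dy=\mcaF^{\pm}(\lambda,x).
\]

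\textbf{Justification.} The only point needing care is that every quantity is well defined and lies in $L^2$. For the first identity with $f\in L^2$, the bound $|yv(y)|\lesssim\langle y\rangle^{1-\beta/2}\in L^2$ holds since $\beta>2$, so the $y$-integral converges absolutely and Fubini in $(y,\theta)$ is legitimate. For the second identity, I would take $f$ in a dense class such as $L^2_s$ with $s>1/2$, or a compactly supported function, so that $\int e^{\pm i\lambda|y|}f(y)\,dy$ is finite. Then $xv(x)\in L^2$ shows that $\mcaF^\pm(\lambda,\cdot)\in L^2$, so applying $T$ makes sense. Beyond these bookkeeping issues there is no real obstacle: the lemma is a mean-value cancellation.
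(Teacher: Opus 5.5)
Your proposal is correct and is essentially the paper's proof. The paper uses the same first-order Taylor formula $e^{\pm i\lambda|x-y|}=e^{\pm i\lambda|x|}\mp i\lambda y\int_0^1{\rm sgn}(x-\theta y)e^{\pm i\lambda|x-\theta y|}\,d\theta$ and its $x\leftrightarrow y$ counterpart, and it removes the zeroth-order term with $\langle Tf,v\rangle=0$ (resp.\ $Tv=0$), exactly as you do.

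There is one harmless slip in your justification: $\langle y\rangle^{1-\beta/2}\in L^2$ needs $\beta>3$, not $\beta>2$. This does not matter. The lemma is only applied with $\beta>7$, and in any case the $y$-integrand equals $\frac{\pm i}{2\lambda}\bigl(e^{\pm i\lambda|x-y|}-e^{\pm i\lambda|x|}\bigr)v(y)(Tf)(y)$ pointwise, which is bounded by $\lambda^{-1}|v(y)||(Tf)(y)|\in L^1$.
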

 This lemma shows that, compared to $R^{\pm}_0(\lambda^2)=O(\lambda^{-1})$~(here $O(\lambda^{-1})$ refers to the order of the kernel $R^{\pm}_{0}(\lambda^2,x,y)$ with respect to $\lambda$ and the same convention applies to the operators below unless stated otherwise), the operators considered in this lemma can decrease the singularity near $\lambda=0$. Precisely, 
$$ R^{\pm}_{0}(\lambda^{2})vT=O(1),\quad TvR^{\pm}_{0}(\lambda^{2})=O(1),\quad  T=Q,S_1,\quad\lambda\rightarrow 0^{+}.$$
\begin{proof}[Proof of Lemma {\rm\ref{cancel lemma}}] Firstly, for $T=Q,S_1$, it follows from \eqref{kernel of free resol} that 
\begin{equation}\label{TvR0}
[R^{\pm}_{0}(\lambda^{2})vTf](x)=\frac{\pm i}{2\lambda}\int_{\R}^{}e^{\pm i \lambda|x-y|}v(y)(Tf)(y)dy,\ [TvR^{\pm}_{0}(\lambda^{2})f](x)=\frac{\pm i}{2\lambda}T\big(v(x)\int_{\R}^{}e^{\pm i \lambda|x-y|}f(y)dy\big) 
.
\end{equation}
Applying the Taylor expansion to $e^{\pm i\lambda|x-y|}$ yields
\begin{equation}\label{taylor expan}
e^{\pm i\lambda|x-y|}=e^{\pm i\lambda|x|}\mp i\lambda y\int_{0}^{1}{\rm sgn}(x-\theta y)e^{\pm i\lambda|x-\theta y|}d\theta=e^{\pm i\lambda|y|}\mp i\lambda x\int_{0}^{1}{\rm sgn}(y-\theta x)e^{\pm i\lambda|y-\theta x|}d\theta.
\end{equation}
Substituting the first (resp. second) expression of \eqref{taylor expan} into the first (resp. second) formula of \eqref{TvR0} and using the orthogonal condition $\left<Tf,v\right>=0$ (resp. $Tv=0$) for $T=Q,S_1$, we obtain the desired result.
\end{proof}
Using this lemma, we can classify the operators in \eqref{deco of regular} and \eqref{deco of resonance} into the following two groups according to the order of their kernels with respect to $\lambda$ as $\lambda\rightarrow0^{+}$:
\begin{equation*}
O(\lambda):K_A~(A\in\Omega_0\cup\Omega_1),\quad O(1):K~(K=K_{-1},K_{01},K_{02}, K_{P_1}, K_{P_2}).
\end{equation*}

The $L^p$ boundedness of $W^L_{-}$ consequently reduces to proving the boundedness of these two operator classes. In what follows, we shall prove Theorem \ref{theorem of lp bounds of low energy} through two Propositions \ref{good1} and \ref{bounds of Ptuta}. 

First, we deal with the operators in the class $O(\lambda)$. Prior to this, we state the following Schur test lemma, which will be used frequently to establish the $L^p$ boundedness of integral operators.
\begin{lemma}\label{shur test}
If the kernel $K(x,y)$ satisfies $$\sup_{x\in\R}\int|K(x,y)|dy+\sup_{y\in\R}\int|K(x,y)|dx<\infty,$$ then $K\in \B(L^{p}(\R))$ for all $1\leq p\leq \infty.$
\end{lemma}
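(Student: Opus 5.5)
The plan is to prove the Schur test (Lemma \ref{shur test}) by treating the endpoints directly and then interpolating. Set
$$A:=\sup_{x\in\R}\int|K(x,y)|\,dy,\qquad B:=\sup_{y\in\R}\int|K(x,y)|\,dx .$$

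\emph{Endpoint $p=\infty$.} For $f\in L^\infty$,
$$|(Kf)(x)|\le\int|K(x,y)||f(y)|\,dy\le A\|f\|_{L^\infty}$$
for every $x$. Hence $\|K\|_{\B(L^\infty)}\le A$.

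\emph{Endpoint $p=1$.} For $f\in L^1$, Tonelli's theorem gives
$$\|Kf\|_{L^1}\le\int\!\!\int|K(x,y)||f(y)|\,dy\,dx=\int|f(y)|\Big(\int|K(x,y)|\,dx\Big)dy\le B\|f\|_{L^1}.$$
The same computation shows that the integral defining $(Kf)(x)$ converges absolutely for a.e.\ $x$, so $Kf$ is well defined.

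\emph{Intermediate $1<p<\infty$.} Rather than invoking Riesz--Thorin, I would argue directly with H\"older, since this avoids any question of whether $K$ is a priori defined on $L^1+L^\infty$. Let $p'$ be the dual exponent. Write
$$|K(x,y)||f(y)|=|K(x,y)|^{1/p'}\cdot|K(x,y)|^{1/p}|f(y)|.$$
Applying H\"older in $y$ gives
$$|(Kf)(x)|\le A^{1/p'}\Big(\int|K(x,y)||f(y)|^p\,dy\Big)^{1/p}.$$
Raise this to the $p$-th power, integrate in $x$, and use Tonelli together with the column bound $B$:
$$\|Kf\|_{L^p}^p\le A^{p/p'}B\|f\|_{L^p}^p,$$
that is,
$$\|K\|_{\B(L^p)}\le A^{1-1/p}B^{1/p}.$$

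There is no real obstacle in this argument. The only points requiring care are justifying the interchange of integrals, which Tonelli handles since all integrands are nonnegative, and confirming that $Kf$ is defined a.e. The latter follows from the finiteness of the bounds established above.
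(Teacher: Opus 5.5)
Your proof is correct: the two endpoint bounds, together with the H\"older splitting $|K|=|K|^{1/p'}|K|^{1/p}$, give $\|K\|_{\B(L^p)}\le A^{1-1/p}B^{1/p}$, and Tonelli justifies both the interchange of integrals and the a.e.\ finiteness of $Kf$. The paper gives no proof of this lemma and treats it as a standard fact; your argument is the standard proof of Schur's test, and the direct H\"older route yields the same constant Riesz--Thorin would without first defining $K$ on $L^1+L^\infty$.
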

In particular, the estimate $|K(x,y)|\lesssim \left<|x|-|y|\right>^{-\gamma}$ with $\gamma>1$ satisfies the condition of this lemma and will be used frequently in the proof.

\begin{proposition}\label{good1}
Under the assumption of Theorem {\rm\ref{theorem of lp bounds of low energy}}, let $\Omega_j$ and $K_A$ be as in \eqref{kernels in low energy}. Then for any $j=0,1$ and $A\in\Omega_j$, we have $K_A\in\B(L^p)$ for all $\ 1\leq p\leq\infty$.
\end{proposition}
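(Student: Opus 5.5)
We will prove Proposition \ref{good1} by deriving a pointwise bound $|K_A(x,y)|\lesssim\left<|x|-|y|\right>^{-2}$ (or a sum of such terms) and then invoking the Schur test, Lemma \ref{shur test}. For every $A\in\Omega_j$ at least one of the projections $Q$ on the left or right is available, or the weight $\lambda$ is explicit, or $\Gamma^j_2(\lambda)=O(\lambda^2)$. In each case we first use Lemma \ref{cancel lemma} to remove the $\lambda^{-1}$ singularities of the free resolvents wherever a $Q$ appears. For $A=QA^j_0Q$ both sides are treated this way, and the kernel becomes
\begin{equation*}
K_A(x,y)=\int_0^{\infty}\lambda\chi(\lambda)\iint \mcaR^{+}(\lambda,x,u)\,\big(QA^j_0Q\big)(u,w)\,\big[\mcaR^{+}-\mcaR^{-}\big](\lambda,y,w)\,du\,dw\,d\lambda .
\end{equation*}
The integrand in $\lambda$ is a combination of oscillatory factors $e^{i\lambda(\pm|x-\theta u|\pm|y-\theta' w|)}$ times $\lambda\chi(\lambda)$. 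For $A=\lambda QA^j_{11}$ or $\lambda A^j_{12}Q$, only one side has $Q$. The other side keeps the raw kernel $\frac{\pm i}{2\lambda}e^{\pm i\lambda|\cdot|}$, whose $\lambda^{-1}$ is absorbed by the extra $\lambda$. For $\Gamma^j_2(\lambda)$ the $\lambda^{-2}$ from the two resolvents is absorbed by $\lambda\cdot\lambda^2$ from the estimate \eqref{estim of gamma}.

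The key step is to gain decay in the phase variable $\Phi=\pm|x-\theta u|\pm|y-\theta' w|$. In every case the $\lambda$-amplitude is $\lambda\chi(\lambda)$ times something smooth, or at worst $C^2$ in $\lambda$ with $\partial^s_\lambda$ of size $\lambda^{1-s}$ (for $\Gamma_2$ this is $\lambda\cdot\lambda^{2-s}/\lambda^2$). Integrating by parts twice in $\lambda$ then gives a factor $\left<\Phi\right>^{-2}$. The boundary terms vanish at $\lambda=0$ because the amplitude vanishes there to first order. One subtlety is that the second derivative of an $O(\lambda)$ amplitude behaves like $\lambda^{-1}$, which is not integrable. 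I would handle this by integrating by parts once fully, and then splitting at $\lambda\sim|\Phi|^{-1}$ for the second step. This is the standard device and yields $|\Phi|^{-2}$ up to harmless factors; $\left<\Phi\right>^{-1-\epsilon}$ would also suffice.

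Next we bound $\left<\Phi\right>^{-2}$. We use
\begin{equation*}
\big||x-\theta u|\pm|y-\theta'w|\big|\ \ge\ \big||x|\pm|y|\big|-|u|-|w|,
\end{equation*}
so $\left<\Phi\right>^{-2}\lesssim\left<u\right>^{2}\left<w\right>^{2}\left<|x|-|y|\right>^{-2}$; the case $\left<|x|+|y|\right>^{-2}$ is even better. The weights $\left<u\right>^2\left<w\right>^2$ are absorbed by the decay of $v$, since $v(u)\lesssim\left<u\right>^{-\beta/2}$. The factors $u\,v(u)$ and $w\,v(w)$ in $\mcaR^\pm$ give $\left<u\right>^{3}v(u)\in L^2$ when $\beta>7$. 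The absolute boundedness of $A$, or of $\partial^s_\lambda\Gamma_2$ in $\B(L^2)$, lets us bound $\iint\left<u\right>^3v(u)|A(u,w)|\left<w\right>^3v(w)\,du\,dw<\infty$ by Cauchy--Schwarz. Hence $|K_A(x,y)|\lesssim\left<|x|-|y|\right>^{-2}$, and Lemma \ref{shur test} gives $K_A\in\B(L^p)$ for $1\le p\le\infty$.

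The main obstacle I expect is the $\Gamma^j_2(\lambda)$ term. Its kernel is only known to be $C^2$ in $\lambda$ in operator norm, not pointwise, so the $\lambda$ integration by parts must be done inside the $u,w$ pairing. Concretely, I would write $K_{\Gamma}(x,y)=\int_0^\infty\lambda\chi\,\langle\Gamma_2(\lambda)\,vG_y(\lambda),\,vG_x(\lambda)\rangle\,d\lambda$, where $G_x,G_y$ are the free resolvent kernel functions. The phases $e^{\pm i\lambda|x|}$, $e^{\pm i\lambda|y|}$ are factored out of them, leaving remainders whose $\lambda$-derivatives cost $\left<u\right>$ powers. We then integrate by parts against $e^{i\lambda(\pm|x|\pm|y|)}$, estimating each derivative through \eqref{estim of gamma} and $\|\left<\cdot\right>^{k}v\|_{L^2}$. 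The condition $\beta>7$ should supply exactly enough moments for this.
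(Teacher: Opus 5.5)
Your proposal is correct and follows essentially the paper's route: Lemma \ref{cancel lemma} removes the $\lambda^{-1}$ singularities, you integrate by parts twice in $\lambda$ using moments of $v$ and absolute boundedness (needing $\langle\cdot\rangle^3v\in L^2$, i.e.\ $\beta>7$), and you conclude with the Schur test, while for $\Gamma^j_2(\lambda)$ you factor out $e^{i\lambda(|x|\pm|y|)}$ inside the $L^2$ pairing exactly as the paper does. The differences are minor: for $\lambda$-independent $A$ you integrate against the full phase $\Phi$ and then use $\langle\Phi\rangle^{-2}\lesssim\langle u\rangle^2\langle w\rangle^2\langle|x|-|y|\rangle^{-2}$, whereas the paper keeps the residual phase in the amplitude at the same moment cost; for $\Gamma^j_2$ your split at $\lambda\sim||x|\pm|y||^{-1}$ (giving $\langle\cdot\rangle^{-2}$ up to a logarithm) replaces the paper's dyadic decomposition and interpolation (giving $\langle|x|-|y|\rangle^{-3/2}$); and only the first boundary term vanishes, since the second one (at $\lambda=0$, or at your splitting point) is merely $O(|\Phi|^{-2})$, which is harmless.
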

\begin{proof} Recall that $\Omega_j=\{QA^j_0Q,\lambda QA^j_{11},\lambda A^j_{12}Q,  \Gamma^j_2(\lambda)\}$, it suffices to consider the case $A\in\Omega_0$. Moreover, we only handle $K_A$ for $A=QA^0_0Q$ and $A=\Gamma^0_2(\lambda)$, since the remained cases are similar.
\vskip0.2cm
\underline{(1) For $A=QA^0_0Q$}, denote
\begin{equation*}
K_0(x,y)=\int_{0}^{+\infty}\lambda\chi(\lambda)\big[R^+_0(\lambda^{2})vQA^0_0Qv(R^+_0-R^-_0)(\lambda^{2})\big](x,y)d\lambda.
\end{equation*}
By Lemma \ref{cancel lemma}, we have 
\begin{equation*}
    K_0(x,y)=\frac{1}{4}\sum\limits_{\pm}K^{\pm}_0(x,y),
\end{equation*}
where $\Theta=(u_1,u_2,\theta_1,\theta_2),\ X_1=x-\theta_1u_1,\ Y_2=y-\theta_2u_2$ and 
\begin{align*}
K^{\pm}_0(x,y)&=\pm\int_{0}^{+\infty}e^{i\lambda(|x|\pm|y|)}\lambda\chi(\lambda)k^{\pm}_0(\lambda,x,y)d\lambda,\\
k^{\pm}_0(\lambda,x,y)&=\int_{\R^2\times[0,1]^2}({\rm sgn}X_1)({\rm sgn}Y_2)e^{i\lambda(|X_1|-|x|\pm (|Y_2|-|y|))}u_1u_2(vQA^0_0Qv)(u_1,u_2)d\Theta.
\end{align*}
We shall prove \begin{equation}\label{esti of Kpm0}
    |K^{\pm}_{0}(x,y)|\lesssim\left<|x|-|y|\right>^{-2},\quad \forall\  x,y\in \R.
\end{equation}
Together with Lemma \ref{shur test} this yields $K_0\in \B(L^p)$ for all $1\leq p\leq\infty$. First, we note that $K^{\pm}_0(x,y)$ is uniformly bounded since $\chi(\lambda)\in C^{\infty}_0(\R)$ and 
\begin{equation*}
|k^{\pm}_0(\lambda,x,y)|\lesssim \int_{\R}\left<u_1\right>|v(u_1)|\cdot |QA^0_0Q|(\left<\cdot\right>|v(\cdot)|)(u_1)du_1 \lesssim\|\left<u_1\right>^2V(u_1)\|_{L^1} <\infty. 
\end{equation*}
Thus, if $||x|-|y||\leq1$, then $|K^{\pm}_{0}(x,y)|\lesssim1\lesssim\left<|x|-|y|\right>^{-2}.$ Now suppose $||x|-|y||\geq1$ and set $$\Phi^{\pm}(x,y,\Theta)=i(|X_1|-|x|\pm (|Y_2|-|y|)).$$
By the triangle inequality, $\left|\Phi^{\pm}(x,y,\Theta)\right|\leq |u_1|+|u_2|$. Hence, for any $\ell=0,1,2$, 
\begin{equation}\label{estimate of k_0}
|(\partial^{\ell}_{\lambda}k^{\pm}_0)(\lambda,x,y)|\lesssim
\int_{\R^2}^{}\left<u_1\right>^{\ell+1}|(vQA^0_0Qv)(u_1,u_2)|\left<u_2\right>^{\ell+1}du_1du_2\lesssim \|\left<u_1\right>^{2\ell+2}V(u_1)\|_{L^1} <\infty.
\end{equation} 
Let $L^{\pm}(\lambda,x,y)=\lambda \chi(\lambda)k^{\pm}_0(\lambda,x,y)$. Integrating by parts twice gives
\begin{align}\label{inte parts twice of Kpm0}
\begin{split}
K^{\pm}_{0}(x,y)&=\pm\left[\frac{e^{i\lambda(|x|\pm |y|)}}{i(|x|\pm |y|)}L^{\pm}(\lambda,x,y)\Big|^{+\infty}_{\lambda=0}-\int_{0}^{+\infty}\frac{e^{i\lambda(|x|\pm |y|)}}{i(|x|\pm |y|)}(\partial_{\lambda}L^{\pm})(\lambda,x,y)\right]\\
&=\frac{\pm}{(|x|\pm |y|)^2}\left(e^{i\lambda(|x|\pm |y|)}(\partial_{\lambda}L^{\pm})(\lambda,x,y)\Big|^{+\infty}_{\lambda=0}-\int_{0}^{+\infty}e^{i\lambda(|x|\pm |y|)}(\partial^2_{\lambda}L^{\pm})(\lambda,x,y
)d\lambda\right)\\
&=O(\left<|x|-|y|\right>^{-2}),
\end{split}
\end{align}
where the second equality uses $L^{\pm}(\lambda,x,y)\big|_{\lambda=0,+\infty}=0$, and the last step follows from $\chi(\lambda)\in C^{\infty}_0(\R)$ and \eqref{estimate of k_0}.
This establishes \eqref{esti of Kpm0}.
\vskip0.3cm
\underline{(2) For $A=\Gamma^0_2(\lambda)$}, let 
\begin{equation*}
K_r(x,y)=\int_{0}^{+\infty}\lambda\chi(\lambda)\big[R^+_0(\lambda^{2})v\Gamma^0_2(\lambda)v(R^+_0-R^-_0)(\lambda^{2})\big](x,y)d\lambda.
\end{equation*}
We can write
\begin{equation*}
    K_r(x,y)=-\frac{1}{4}\sum\limits_{\pm}K^{\pm}_r(x,y),
\end{equation*}
where $X_1=x-u_1,\ Y_2=y-u_2,\ {\Gamma}(\lambda)=\lambda^{-2}\Gamma^0_2(\lambda)$ and 
\begin{align*}
K^{\pm}_r(x,y)&=\int_{0}^{+\infty}e^{i\lambda(|x|\pm|y|)}\lambda\chi(\lambda)k^{\pm}_r(\lambda,x,y)d\lambda,\\
k^{\pm}_r(\lambda,x,y)&=\int_{\R^2}e^{i\lambda(|X_1|-|x|\pm (|Y_2|-|y|))}(v\Gamma(\lambda)v)(u_1,u_2)du_1du_2.
\end{align*}
\indent Compared with $K_0$, the additional complexity comes from $k^{\pm}_r$. To handle it we introduce a homogeneous dyadic partition of unity $\{{\varphi_{N}}\}_{N\in \Z}$ on $(0,+\infty)$: choose $\varphi\in C^{\infty}_{0}(\R^{+})$ with $0\leq\varphi\leq1$, supp$~\varphi\subset[\frac{1}{4},1],$ set $\ \varphi_{N}(\lambda)=\varphi(2^{-N}\lambda)$ so that supp$~\varphi_{N}\subset[2^{N-2},2^{N}]$, and $$ \sum_{N\in\Z}{}\varphi_{N}(\lambda)=1,  \quad\lambda>0.$$
Let $N_0=[\log_22\lambda_0]$. Combining the supports of $\chi(\lambda)$ and $\varphi_N(\lambda)$, we decompose
\begin{equation*}
\chi(\lambda)=\sum_{N=-\infty}^{N_0}{\chi}_{N}(\lambda),\quad {\chi}_{N}(\lambda)=\chi(\lambda)\varphi_{N}(\lambda),\quad\lambda>0.
\end{equation*}
Consequently,
\begin{equation*}
K^{\pm}_{r}(x,y)=\sum_{N=-\infty}^{N_0}\int_{0}^{+\infty}e^{i\lambda(|x|\pm |y|)}\lambda{\chi}_{N}(\lambda)k^{\pm}_r(\lambda,x,y)d\lambda:=\sum_{N=-\infty}^{N_0}K^{\pm}_{r,N}(x,y).
\end{equation*}
We claim that for every  $N\leq N_0$,
\begin{equation}\label{estim of krlN}
|K^{\pm}_{r,N}(x,y)|\lesssim \min\{{2^{2N},\left<|x|-|y|\right>^{-2}}\},\quad \forall\ x,y\in\R.
\end{equation}
Once this is proved, it implies for any $\theta\in[0,1]$, 
 $$|K^{\pm}_{r,N}(x,y)|\lesssim2^{2N(1-\theta)}\left<|x|-|y|\right>^{-2\theta},\quad \forall\ x,y\in\R.$$
Choosing $\theta=\frac{3}{4}$ gives
$$|K^{\pm}_{r}(x,y)|\lesssim\sum_{N=-\infty}^{N_0}2^{\frac{N}{2}}\left<|x|-|y|\right>^{-\frac{3}{2}}\lesssim\left<|x|-|y|\right>^{-\frac{3}{2}},\quad \forall\ x,y\in\R,$$
and Lemma \ref{shur test} yields $K_r\in\B(L^p)$ for all $1\leq p\leq\infty$. To derive \eqref{estim of krlN}, we first recall from \eqref{estim of gamma} that
$$\|\Gamma^{(j)}(\lambda)\|_{\B(L^2)}=O(\lambda^{-j}),\quad 0<\lambda\leq2\lambda_0.$$
This gives, for any $\ell=0,1,2$,
\begin{equation}\label{estim of krN}
\sup_{x,y\in\R}\left|(\partial^{\ell}_{\lambda}k^{\pm}_r)(\lambda,x,y)\right|\lesssim\lambda^{-\ell}\big\|\left<\cdot\right>^{2\ell}V(\cdot)\big\|_{L^1(\R)},\quad  0<\lambda\leq2\lambda_0.
\end{equation}
Now fix $ N\leq N_0$. Because supp$\ \chi_N\subseteq [0,2\lambda_0]\cap[2^{N-2},2^N]$, 
\begin{align}\label{klbn}
|K^{\pm}_{r,N}(x,y)|=\left|\int_{0}^{+\infty}e^{i\lambda(|x|\pm |y|)}\lambda{\chi}_{N}(\lambda)k^{\pm}_r(\lambda,x,y)d\lambda\right|\lesssim\int_{2^{N-2}}^{2^{N}}\lambda d\lambda\lesssim2^{2N},\quad \forall \ x,y\in\R.
\end{align}
Hence, if $||x|-|y||\leq1$, then
$$|K^{\pm}_{r,N}(x,y)|\lesssim2^{2N}\lesssim2^{2N_0} \left<|x|-|y|\right>^{-2}\lesssim \left<|x|-|y|\right>^{-2},\quad \forall\ N\leq N_0 ,$$
If $||x|-|y||\geq1$, set $L^{\pm}_{N}(\lambda,x,y)=\lambda{\chi}_{N}(\lambda)k^{\pm}_r(\lambda,x,y)$.
Apply integration by parts twice to $K^{\pm}_{r,N}(x,y)$ and using ${\chi}^{(\ell)}_N(\lambda)\big|_{\lambda=0,+\infty}=0$ together with \eqref{estim of krN}, we have
\begin{align*}
\big|K^{\pm}_{r,N}(x,y)\big|=\frac{1}{(|x|\pm|y|)^2}\left|\int_{0}^{+\infty}e^{i\lambda(|x|\pm |y|)}(\partial^2_{\lambda}L^{\pm}_{N})(\lambda,x,y)d\lambda\right|\lesssim\frac{1}{(|x|\pm |y|)^2}\int_{2^{N-2}}^{2^N}\big|(\partial^2_{\lambda}L^{\pm}_{N})(\lambda,x,y)\big|d\lambda.
\end{align*}
Since
$\big|{\chi}^{(s)}_{N}(\lambda)\big|\lesssim2^{-Ns}$ for any $s\in\N$,
this bound together with \eqref{estim of krN} implies 
$$\big|(\partial^2_{\lambda}L^{\pm}_{N})(\lambda,x,y)\big|\lesssim 2^{-N},\quad \lambda\in[2^{N-2},2^N]$$
and therefore $\big|K^{\pm}_{r,N}(x,y)\big|\lesssim \left<|x|-|y|\right>^{-2}.$ The claim \eqref{estim of krlN} follows.
\end{proof}
Next, we turn to establish the boundedness of operators in the class $O(1)$. For these integral operators, we shall need the following key lemma.
\begin{lemma}\label{C-Z lemma}
{\rm(\cite[Lemma 3.3 and Lemma 3.6]{MWY24})}
{\rm (i)} Let $\phi \in C^{\infty}(\R,\R)$ be such that $\phi(s)=0$ for $0\leq s\leq1$ and $\phi(s)=1$ for $s\geq2$. Define functions $k^{\pm}(x,y)$ by
\begin{equation*}
k^{\pm}(x,y)=\frac{\phi(||x|\pm|y||^2)}{|x|\pm |y|},
\end{equation*}
 then  ${k^{\pm}}\in\B(L^p)$ for all $1<p<\infty$. 
\vskip0.2cm
{\rm(ii)} Let $k^{\pm}$ be as in {\rm(i)} and let $g^{\pm},g_1,g_2$ be integral operators with the following kernels:
\begin{align*}
g^{\pm}(x,y)&:=k^{+}(x,y)\pm k^{-}(x,y);\\
g_{1}(x,y)&:=({\rm sgn}~x)g^+(x,y);\\
g_{2}(x,y)&:=({\rm sgn}~x)({\rm sgn}~y)g^{-}(x,y),
\end{align*}
then $g^{\pm},g_1\in\B(\HH^1,L^1)\cap\B(L^{\infty},{\rm BMO})$ and $g_2\in \B(\HH^1,L^1)$.
\end{lemma}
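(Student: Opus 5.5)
The plan is to reduce all four operators to the (truncated) Hilbert transform, by folding the real line onto the half-line through $x\mapsto |x|$. Throughout, write $\tilde k(t)=\phi(t^2)/t$, an odd kernel. It equals $1/t$ for $|t|\ge \sqrt2$, and $|\tilde k(t)|\lesssim 1$ for $|t|\le\sqrt2$. Hence $\tilde k(t)-\mathrm{p.v.}\frac1t$ is an odd, compactly supported Calder\'on--Zygmund kernel. Consequently the convolution operator $\tilde{\mathcal H}f=\tilde k*f$ is a standard Calder\'on--Zygmund operator. It lies in $\B(L^p)$ for $1<p<\infty$, and also in $\B(\HH^1,L^1)\cap\B(L^\infty,{\rm BMO})$, as follows from the classical results for $\HH$ together with an $L^1$-kernel correction.

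\textbf{Step 1 (folding identity).} Any kernel of the form $K(|x|,|y|)$ satisfies
$\int_\R K(|x|,|y|)f(y)\,dy=\int_0^\infty K(|x|,s)\big(f(s)+f(-s)\big)ds$.
Hence only the even part $f_e=f+f(-\cdot)$ enters the operators $k^\pm$ and $g^\pm$. For $g_2$ the input is $\mathrm{sgn}(y)f(y)$, so only the odd part $f_o=f-f(-\cdot)$ enters. The elementary identities
\[
\int_\R \tilde k(r-t)h(t)\,dt=\int_0^\infty\big(\tilde k(r-s)+\tilde k(r+s)\big)h(s)\,ds\quad(h \text{ even}),
\]
\[
\int_\R \tilde k(r-t)h(t)\,dt=\int_0^\infty\big(\tilde k(r-s)-\tilde k(r+s)\big)h(s)\,ds\quad(h \text{ odd}),
\]
identify each kernel on $r,s>0$. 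Note that $k^-(x,y)=\tilde k(|x|-|y|)$ and $k^+(x,y)=\tilde k(|x|+|y|)$, because $|x|+|y|\ge0$.

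\textbf{Step 2 (the operators $g^\pm$ and $g_1$).} By Step 1, $g^+f(x)=(\tilde{\mathcal H}f_e)(|x|)$. Since $\tilde{\mathcal H}f_e$ is odd, this gives $g_1f=\tilde{\mathcal H}f_e$ exactly. Likewise, $g^-f(x)$ equals $(\tilde{\mathcal H}f_o)(|x|)$ up to the sign convention, and this is $\tilde{\mathcal H}f_o(x)$ because $\tilde{\mathcal H}f_o$ is even. The reflection $f\mapsto f(-\cdot)$ is an isometry of $\HH^1$, $L^1$, $L^\infty$ and ${\rm BMO}$. Therefore $g_1,g^-\in\B(\HH^1,L^1)\cap\B(L^\infty,{\rm BMO})$.

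For $g^+$ we have $g^+f=F(|\cdot|)$ with $F=\tilde{\mathcal H}f_e$ odd. The $L^1$ norm is unchanged by this folding. For ${\rm BMO}$, I use the standard fact that $\|F(|\cdot|)\|_{\rm BMO}\lesssim\|F\|_{\rm BMO}$. To see this, take an interval $I$ meeting both half-lines. Its folded image lies in $[0,\max|I|]$, which has length at most $|I|$. The oscillation of $F(|\cdot|)$ over $I$ is then controlled by twice the oscillation of $F$ over that image, taken with the same constant.

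\textbf{Step 3 ($L^p$ bounds for $k^\pm$).} For $k^-$, note that $k^-f(x)=(\tilde{\mathcal H}f_e)(|x|)-k^+f(x)$ in view of Step 1. Hence it suffices to treat $k^+$. Its kernel satisfies $|k^+(x,y)|\le (|x|+|y|)^{-1}$, and the map $f\mapsto\int_0^\infty\frac{h(s)}{r+s}ds$ is bounded on $L^p(0,\infty)$ for $1<p<\infty$ by Hilbert's (Hardy's) inequality. This can be checked by Schur's test with the weight $s^{-1/p}$.

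\textbf{Step 4 ($g_2\in\B(\HH^1,L^1)$): the main obstacle.} Here $g_2f(x)=\mathrm{sgn}(x)\,\tilde{\mathcal H}\big((\mathrm{sgn}\cdot f)_{o}\big)(x)$, and the difficulty is that multiplication by $\mathrm{sgn}$ does not preserve $\HH^1$. I would argue via atoms. Take an $\HH^1$-atom $a$ supported in an interval $I$ with centre $x_I$.

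If $\mathrm{dist}(0,I)\ge|I|$, then $\mathrm{sgn}\,a$ is again an atom, and so is its reflection. The Calder\'on--Zygmund bound then gives $\|g_2a\|_{L^1}\lesssim1$.

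If $\mathrm{dist}(0,I)<|I|$, then $I\subset[-3|I|,3|I|]$. In this case I use $L^2$ boundedness on $[-6|I|,6|I|]$, which gives a contribution of size $|I|^{1/2}\|a\|_2\lesssim1$. Away from that interval, I use the explicit kernel $\tilde k(r-s)-\tilde k(r+s)=\frac{2s}{r^2-s^2}$ for $r\ge 6|I|$. For such $r$ the kernel is $O(|I|/r^2)$ uniformly in $s\in[0,3|I|]$, so
\[
\int_{|x|\ge6|I|}|g_2a|\,dx\lesssim\int_{6|I|}^\infty\frac{|I|}{r^2}\,\|a\|_1\,dr\lesssim1,
\]
and no cancellation of $a$ is needed. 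This gain of a factor $s/r$ from the kernel itself is the key point, and it is the step I expect to require the most care. In particular one must handle the truncation region where $\phi\neq1$, which is harmless since the kernel is bounded there, and check that the $L^2$ piece is uniform in $|I|$.
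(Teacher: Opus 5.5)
Most of your argument is sound. The folding identities, the treatment of $g^+$ and $g_1$ (including the even-reflection estimate in ${\rm BMO}$) and the $L^p$ bounds for $k^\pm$ are all correct. The paper itself gives no proof of this lemma; it imports it from Mizutani--Wan--Yao, so there is no route to compare against.

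The genuine error is the identification of $g^-$ in Step 2. Its kernel $\tilde k(|x|+|y|)-\tilde k(|x|-|y|)$ depends only on $|y|$, so by your own Step 1 only $f_e$ enters. Hence $g^-$ annihilates odd $f$, whereas your formula $\tilde{\mathcal H}f_o$ annihilates even $f$. For example, with $f=\chi_{[-1,1]}$ and $|x|\ge 1+\sqrt2$ one has $g^-f(x)=2\ln\frac{x^2-1}{x^2}\neq0$. Applying your odd identity to the odd extension $\mathrm{sgn}\cdot f_e$ of $f_e|_{(0,\infty)}$ gives the correct formulas
\[
g^-f=-\tilde{\mathcal H}\big(\mathrm{sgn}\cdot f_e\big),\qquad g_2f=-\,\mathrm{sgn}\cdot\tilde{\mathcal H}f_o .
\]
So you have swapped the two operators. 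Up to sign, your Step 2 formula describes $\mathrm{sgn}\cdot g_2$, and your Step 4 formula describes $\mathrm{sgn}\cdot g^-$ (note that $(\mathrm{sgn}\cdot f)_o=\mathrm{sgn}\cdot f_e$). Consequently both claimed bounds for $g^-$ rest on a false identity. In particular, $g^-\in\B(L^\infty,{\rm BMO})$ is not established anywhere. What Step 2 actually proves is the bound for $f\mapsto\tilde{\mathcal H}f_o=-\mathrm{sgn}\cdot g_2f$, and multiplication by $\mathrm{sgn}$ does not preserve ${\rm BMO}$. Indeed $g_2\notin\B(L^\infty,{\rm BMO})$, which is exactly what the paper's counterexample for $K_{-1}$ exploits.

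The repair uses only ingredients you already have.
\begin{itemize}
\item With the correct formulas, $g^-\in\B(L^\infty,{\rm BMO})$ is immediate, since $\|\mathrm{sgn}\cdot f_e\|_{L^\infty}\le 2\|f\|_{L^\infty}$.
\item $g_2\in\B(\HH^1,L^1)$ is also immediate: the outer $\mathrm{sgn}$ does not change $L^1$ norms, and $\|f_o\|_{\HH^1}\lesssim\|f\|_{\HH^1}$. So the obstacle you anticipated in Step 4 does not arise for $g_2$ at all.
\item The obstacle arises instead for $g^-\in\B(\HH^1,L^1)$, where $\mathrm{sgn}$ acts on the input. It is overcome by the cancellation built into oddness: $\mathrm{sgn}\cdot f_e$ always has mean zero. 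For an atom $a$ supported in $I$ with $\mathrm{dist}(0,I)<|I|$, the function $\mathrm{sgn}\cdot a_e$ is a fixed multiple of an atom on $[-2|I|,2|I|]$. For $\mathrm{dist}(0,I)\ge|I|$ it equals $\pm(a-a(-\cdot))$.
\end{itemize}
Alternatively, your Step 4 estimate applies verbatim to $g^-$: the local $L^2$ bound plus the cancellation-free kernel bound $|\tilde k(r-s)-\tilde k(r+s)|\lesssim s\langle r\rangle^{-2}$ for $s\le r/2$. You can also argue by duality, since $(g^-)^*=g^+$ and you have already shown $g^+\in\B(L^\infty,{\rm BMO})$.
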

\begin{proposition}\label{bounds of Ptuta}
Under the assumption of Theorem {\rm\ref{theorem of lp bounds of low energy}}, let $K_{-1},K_{0j},K_{P_j}$ be as in \eqref{kernels in low energy}. Then we have
\begin{itemize}
    \item [{\rm (1)}] for every $K\in\{K_{-1},K_{01},K_{02},K_{{P_1}},K_{{P_2}}\}$, $K\in\B(L^p)$ for all $1<p<\infty$.
    \item [{\rm (2)}] $K_{-1}\in\B(\HH^1,L^1)$, and for any $K\in \{K_{01},K_{02},K_{{P_1}},K_{{P_2}}\}$, $K\in \B(\HH^1,L^1)\cap\B(L^{\infty},{\rm BMO})$.
\end{itemize}
\end{proposition}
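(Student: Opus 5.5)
The plan is to compute each $O(1)$ kernel explicitly enough to write it as a finite sum of a Schur-bounded remainder and operators of the form $a(x)\,g(x,y)\,b(y)$. Here $g$ is one of $k^\pm$, $g^\pm$, $g_1$, $g_2$ from Lemma \ref{C-Z lemma}, and $a,b$ are bounded smooth functions. Part (1) and part (2) then follow from Lemma \ref{C-Z lemma}.

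\emph{Extracting the kernels.} Consider first $K_{P_1}$ with $P_1=-2i\alpha P$. Since $P$ has rank one, the kernel factorizes:
\[
K_{P_1}(x,y)=c_0\int_0^\infty \chi(\lambda)\Big(\int e^{i\lambda|x-u|}v(u)\,du\Big)\Big(\int v(w)\big(e^{i\lambda|w-y|}-e^{-i\lambda|w-y|}\big)\,dw\Big)\,d\lambda .
\]
We write $e^{\pm i\lambda|x-u|}=e^{\pm i\lambda|x|}+O(\lambda|u|)$ using the Taylor formula \eqref{taylor expan}. The leading term is
\[
c_0\|v\|_{L^1}^2\int_0^\infty\chi(\lambda)\big(e^{i\lambda(|x|+|y|)}-e^{i\lambda(|x|-|y|)}\big)\,d\lambda .
\]
We split $\chi=1-(1-\chi)$ and treat $\int_0^\infty e^{i\lambda s}\chi(\lambda)\,d\lambda=\frac{i}{s}+O(\langle s\rangle^{-2})$ for $|s|\ge1$, obtained by one integration by parts. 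For $|s|\le 2$ this integral is bounded. The leading term therefore equals $c\,(k^+-k^-)(x,y)=c\,g^-(x,y)$ plus a kernel bounded by $\langle |x|-|y|\rangle^{-2}$, which Lemma \ref{shur test} handles. The correction terms carry an extra factor of $\lambda$, so they are treated exactly like Proposition \ref{good1}: they give $O(\langle|x|-|y|\rangle^{-2})$ bounds using the moments $\langle u\rangle^{k}v\in L^1$. The same scheme applies to $K_{P_2}$, with $vT_0S_1T_0v$ in place of $vPv$.

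For $K_{01}$ and $K_{02}$ there is one factor $S_1$ on one side. Lemma \ref{cancel lemma} removes $\lambda^{-1}$ on that side and produces the factors $\frac12 u v(u)\int_0^1{\rm sgn}(x-\theta u)e^{i\lambda|x-\theta u|}\,d\theta$. Replacing ${\rm sgn}(x-\theta u)e^{i\lambda|x-\theta u|}$ by ${\rm sgn}(x)e^{i\lambda|x|}$ costs $O(\lambda\langle u\rangle)$ for $|x|\gtrsim\langle u\rangle$. Near $|x|\lesssim\langle u\rangle$ the error is bounded and localized, and it is absorbed by weights. The leading kernels therefore become $({\rm sgn}\,x)\,g^{\pm}(x,y)$ or $g^\pm(x,y)({\rm sgn}\,y)$ times constants, modulo Schur-bounded remainders.

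For $K_{-1}$ there are $S_1$ factors on both sides. Both $\lambda^{-1}$ singularities are removed, the remaining weight is $\chi(\lambda)$ alone, and the leading part is $c\,({\rm sgn}\,x)({\rm sgn}\,y)\,g^{-}(x,y)=c\,g_2$. This is consistent with the statement that $K_{-1}$ only lands in $\B(\HH^1,L^1)$.

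\emph{Conclusion.} Each $K$ is a constant multiple of a $g$-type operator plus a Schur-bounded operator. Lemma \ref{C-Z lemma}(i) gives $L^p$ boundedness for $1<p<\infty$ in all cases, since $|{\rm sgn}|\le1$ multipliers preserve $L^p$. Lemma \ref{C-Z lemma}(ii) gives the $\HH^1\to L^1$ and $L^\infty\to{\rm BMO}$ bounds for $g^\pm$ and $g_1$, and the $\HH^1\to L^1$ bound for $g_2$. The class of $g^\pm$ composed with right multiplication by ${\rm sgn}\,y$ is handled by duality: its adjoint is $g_1$-type, so the $\HH^1$ and ${\rm BMO}$ statements swap, which still gives both. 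The Schur-bounded pieces lie in $\B(L^1)\cap\B(L^\infty)$, and $\HH^1\subset L^1$, $L^\infty\to L^\infty\subset{\rm BMO}$.

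\emph{Main obstacle.} The hardest step is the careful uniform remainder estimate in the region $||x|-|y||\ge1$ when the sign factors ${\rm sgn}(x-\theta u)$ differ from ${\rm sgn}\,x$, that is, for $|x|\le|u|$. I would first try to bound that region by $\langle x\rangle^{-N}$ decay inherited from the weights on $v$ (requiring $\beta$ large, which is available). This gives a kernel bounded by $\langle x\rangle^{-2}\langle |x|-|y|\rangle^{-1}$ only, so a direct Schur estimate fails logarithmically. I would fix this by one more integration by parts in $\lambda$, which gives $\langle x\rangle^{-2}\langle|x|-|y|\rangle^{-2}$ and is integrable.
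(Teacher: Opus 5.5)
Your treatment of $K_{P_1},K_{P_2}$ is essentially the paper's, apart from a sign discussed below. The step you flag as the main obstacle, however, does not work as proposed.

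\textbf{The error from replacing the shifted signs.} You replace ${\rm sgn}(x-\theta u)e^{i\lambda|x-\theta u|}$ by ${\rm sgn}(x)e^{i\lambda|x|}$ in the factor given by Lemma \ref{cancel lemma}. Where $x$ lies between $0$ and $\theta u$, this error does not vanish at $\lambda=0$. After integrating against $\frac12uv(u)\omega(u)$, its value at $\lambda=0$ is $\frac12b(x)$, where $b(x):=a(x)-2c_2\,{\rm sgn}\,x$ and $a$ is as in \eqref{ax}; in general $b\not\equiv0$.

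The $\lambda$-weight is $\chi(\lambda)$ with $\chi(0)=1$, so every integration by parts reproduces the endpoint term $i\,h(0)/s$ with $s=|x|\pm|y|$. A second integration by parts only improves the remaining integral. So the error is not $O(\langle x\rangle^{-2}\langle|x|-|y|\rangle^{-2})$. Modulo Schur-bounded kernels it is a localized Hilbert transform, $b(x)g^{\pm}(x,y)$ or $\overline{b(y)}g^{\pm}(x,y)$.

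Whether such a piece is harmless depends on cancellation between $k^+$ and $k^-$, which your majorant $\langle x\rangle^{-2}\langle|x|-|y|\rangle^{-1}$ discards. Since $\int|g^+(x,y)|\,dy\lesssim\log(2+|x|)$, $g^-(x,y)=g^+(y,x)$, and $b$ decays, the pieces $b(x)g^+(x,y)$ and $\overline{b(y)}g^-(x,y)$ are in fact Schur bounded. By contrast, $b(x)\,{\rm sgn}(y)\,g^-(x,y)$, which occurs in $K_{-1}$, has infinite row integrals, so Lemma \ref{shur test} does not apply (it is still bounded on $L^1$). Your route can be repaired along these lines, but that argument is missing, and it depends on getting the signs right.

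\textbf{The signs.} The kernel of $(R_0^+-R_0^-)(\lambda^2)$ is $\frac{i}{2\lambda}\big(e^{i\lambda|x-y|}+e^{-i\lambda|x-y|}\big)$, with a plus sign. Hence $K_{P_1}=-\frac12g^++O(\langle|x|-|y|\rangle^{-2})$, not a multiple of $g^-$. A $g^-$ arises only where the factor $S_1v(R_0^+-R_0^-)$ occurs, through $\mcaF^+-\mcaF^-$. So $K_{01}$ is modeled on $g_1={\rm sgn}(x)g^+(x,y)$, and $K_{02}$ on $g^-(x,y)\,{\rm sgn}(y)$, which is the adjoint of $g_1$.

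This is decisive, not cosmetic. The adjoint of $g^{\pm}(x,y)\,{\rm sgn}(y)$ is ${\rm sgn}(x)\,g^{\mp}(x,y)$, so your blanket duality claim holds only for the lower sign. Moreover ${\rm sgn}(x)g^-(x,y)\notin\B(L^\infty,{\rm BMO})$: modulo bounded functions it sends $1$ to a multiple of ${\rm sgn}(x)\log|x|$. This is the same obstruction that gives $K_{-1}\notin\B(L^\infty,{\rm BMO})$ in Section \ref{sec of counterex}. With the sign you used, $K_{01}$ would come out as ${\rm sgn}(x)g^-$ and the BMO bound would be unjustified.

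\textbf{What the paper does instead.} It never replaces the shifted signs. Using Lemma \ref{cancel lemma} and Fubini, it writes each kernel exactly as an average of translates of one model kernel, for example $K_{-1}(x,y)=\frac{-ic}{4}\int u_1u_2(vS_1v)(u_1,u_2)\,k_{-1}(x-\theta_1u_1,\,y-\theta_2u_2)\,d\Theta$ with $k_{-1}=ig_2+O(\langle|x|-|y|\rangle^{-2})$. Likewise $k_{01}=g_1+O(\cdot)$ and $k_{02}(x,y)=\overline{g_1(y,x)}+O(\cdot)$. The $\HH^1$, $L^1$, $L^\infty$ and ${\rm BMO}$ norms are translation invariant, so Minkowski's inequality transfers Lemma \ref{C-Z lemma} directly, with no $x$-localized error terms at all.
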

\begin{proof}
\underline{Step 1: For $K_{P_j}$ with $j=1,2$.} We have 
\begin{equation*}
K_{{P_j}}(x,y)
=-\frac{1}{4}\sum\limits_{\pm}
\int_{0}^{+\infty}e^{i\lambda(|x|\pm|y|)}\chi(\lambda)k^{\pm}_{{P_j}}(\lambda,x,y)d\lambda:=-\frac{1}{4}\sum\limits_{\pm}K^{\pm}_{{P_j}}(x,y),
\end{equation*}
where $X_1=x-u_1,\ Y_2=y-u_2$ and
\begin{equation*}
k^{\pm}_{{P_j}}(\lambda,x,y)=\int_{\R^2}e^{i\lambda(|X_1|-|x|\pm(|Y_2|-|y|))}(v {P_j}v)(u_1,u_2)du_1du_2.
\end{equation*}
It is clear that $|K^{\pm}_{{P_j}}(x,y)|\lesssim 1\lesssim \left<|x|-|y|\right>^{-2}$ if $||x|-|y||\leq1$, because
\begin{equation*}
\sup_{\lambda,x,y}\big|(\partial^{\ell}_{\lambda}k^{\pm}_{{P_j}})(\lambda,x,y)\big|\lesssim\big\|\left<\cdot\right>^{2\ell}V(\cdot)\big\|_{L^1(\R)},\quad \ell=0,1,2.
\end{equation*}
For $||x|-|y||\geq1$, we decompose $K^{\pm}_{P_j}(x,y)$ as 
\begin{equation*}
K^{\pm}_{P_j}(x,y)=\phi_{\pm}K^{\pm}_{P_j}(x,y)+(1-\phi_{\pm})K^{\pm}_{P_j}(x,y),
\end{equation*}
where $\phi_{\pm}:=\phi\left(||x|\pm|y||^2\right)$ with $\phi$ as in Lemma \ref{C-Z lemma}. 
For the first term, by an  argument similar to the one for $K^{\pm}_0(x,y)$ in \eqref{inte parts twice of Kpm0}, we obtain
\begin{equation*}
K^{\pm}_{P_j}(x,y)=ik^{\pm}(x,y)\left<v,{P_j}v\right>+O\big(\left<|x|-|y|\right>^{-2}\big).
\end{equation*}
For the second term, the definition of $\phi$ and uniform boundedness of $K^{\pm}_{P_j}(x,y)$ gives
\begin{equation*}
(1-\phi_{\pm})K^{\pm}_{P_j}(x,y)=O(\left<|x|-|y|\right>^{-2}).
\end{equation*}
Hence, using $P=\alpha\left<\cdot,v\right>v$ and $S_1=\left<\cdot,\omega\right>\omega$ with $\|\omega\|_{L^2}=1$, we derive
\begin{equation}\label{kernel of KPj}
K_{P_j}(x,y)=\left\{\begin{aligned}
&-\frac{1}{2}g^+(x,y)+O(\left<|x|-|y|\right>^{-2}),\quad j=1,\\
&|c_1|^2cg^+(x,y)+O(\left<|x|-|y|\right>^{-2}), \quad j=2.
\end{aligned}\right.
\end{equation}
Combining Lemma \ref{shur test} and Lemma \ref{C-Z lemma} yields $K_{P_j}\in\B(L^p)\cap\B(\HH^1,L^1)\cap\B(L^{\infty},{\rm BMO})$.
\vskip0.3cm
\underline{Step 2: For $K\in\{K_{-1},K_{01},K_{02}\}$.} Based on Lemma \ref{cancel lemma} and the method used for $K_{P_j}$, we obtain
\begin{align}\label{kernels of singular terms}
\begin{split}
    K_{-1}(x,y)&=\frac{c}{4}a(x)\overline{a(y)}g^{-}(x,y)+O(\left<|x|-|y|\right>^{-2}),\\
    K_{01}(x,y)&=-\frac{\overline{c_1}c}{2}a(x)g^{+}(x,y)+O(\left<|x|-|y|\right>^{-2}),\\
 K_{02}(x,y)&=-\frac{{c_1}c}{2}\overline{a(y)}g^{-}(x,y)+O(\left<|x|-|y|\right>^{-2}), 
\end{split}
\end{align}
where 
\begin{equation}\label{ax}
    a(x)=\int_{\R}\int_{0}^{1}({\rm sgn}(x-\theta y))d\theta yv(y)\omega(y)dy.
\end{equation}
Since $a(x)\in L^{\infty}(\R)$, it follows from Lemma \ref{C-Z lemma} and  Minkowski's inequality that $K\in \B(L^p)$ for every $K\in\{K_{-1},K_{01},K_{02}\}$ and $1<p<\infty$. It remains to prove 
$$K_{-1}\in\B(\HH^1,L^1)\ {\rm and}\  K_{01},K_{02}\in\B(\HH^1,L^1)\cap\B(L^{\infty},{\rm BMO}).$$ For $K_{-1}$, using Lemma \ref{cancel lemma} and Fubini's theorem, we have
\begin{equation}\label{kernel of Kminus1}
K_{-1}(x,y)=\frac{-ic}{4}\int_{\R^2\times[0,1]^2}^{}u_1u_2(vS_1v)(u_1,u_2)k_{-1}(X_1,Y_2)d\Theta,
\end{equation}
where $\Theta=(u_1,u_2,\theta_1,\theta_2),\ X_1=x-\theta_1u_1,\ Y_2=y-\theta_2u_2$ and 
\begin{align*}
k_{-1}(x,y)&=\sum\limits_{\pm}\pm({\rm sgn}~x)({\rm sgn}~y)\int_{0}^{+\infty}\chi(\lambda)e^{i\lambda(|x|\pm|y|)}d\lambda.
\end{align*}
By the same method as for $K_{P_j}$, we further obtain  
$$k_{-1}(x,y)=ig_2(x,y)+O(\left<|x|-|y|\right>^{-2}).$$
This, together with Lemma \ref{C-Z lemma} gives  $k_{-1}\in\B(\HH^1,L^1)$.
Since the $L^1$ norm and $\HH^{1}$ norm are invariant under translation, we deduce $K_{-1}\in\B(\HH^1,L^1)$. Indeed, for any $f\in\HH^1(\R)$, let $(\mcaT_b f)(x):=f(x-b)$, then
\begin{align*}
\|K_{-1}f\|_{L^1}&=\frac{c}{4}\int_{\R}\left|\int_{\R^2\times[0,1]^2}u_1u_2(vS_1v)(u_1,u_2)(k_{-1}\mcaT_{-\theta_2u_2} f)(x-\theta_1u_1)d\Theta\right|dx\\
&\lesssim \int_{\R^2\times[0,1]^2}|u_1u_2(vS_1v)(u_1,u_2)|\cdot\|k_{-1}\mcaT_{-\theta_2u_2} f\|_{L^1}d\Theta\\
&\lesssim \int_{\R^2\times[0,1]^2}|u_1u_2(vS_1v)(u_1,u_2)|\cdot\|\mcaT_{-\theta_2u_2} f\|_{\HH^1}d\Theta\lesssim \|f\|_{\HH^1}
\end{align*}
A similar argument for  $K_{-1}$ yields
\begin{align*}
K_{01}(x,y)&=-\frac{1}{4}\int_{\R^2\times[0,1]}u_1(vS_1T_0v)(u_1,u_2)k_{01}(X_1,\widetilde{Y}_2)d{\Theta}_1,\quad {\Theta}_1=(u_1,u_2,\theta_1),\\
K_{02}(x,y)&=-\frac{1}{4}\int_{\R^2\times[0,1]}u_2(vT_0S_1v)(u_1,u_2)k_{02}(\widetilde{X}_1,{Y}_2)d{\Theta}_2,\quad {\Theta}_2=(u_1,u_2,\theta_2)
\end{align*}
where $X_1=x-\theta_1u_1$, $\widetilde{X}_1=x-u_1$, ${Y}_2=y-\theta_2u_2$, $\widetilde{Y}_2=y-u_2$, and
$$k_{01}(x,y)=g_1(x,y)+O(\left<|x|-|y|\right>^{-2}),\quad k_{02}(x,y)=\overline{g_1(y,x)}+O(\left<|x|-|y|\right>^{-2}).$$
Therefore, by Lemma \ref{C-Z lemma}, we conclude $K_{01},K_{02}\in \B(\HH^1,L^1)\cap\B(L^{\infty},{\rm BMO})$.
\end{proof}
To sum up, combining Propositions \ref{good1} and \ref{bounds of Ptuta}, we complete the proof of Theorem \ref{theorem of lp bounds of low energy}. 
\section{Counterexamples to the $L^{1}$ and $L^{\infty}$ boundedness of $W^L_{-}$} \label{sec of counterex}
As shown in Theorem \ref{theorem of lp bounds of low energy}, the unboundedness of $W^L_{-}$ on $L^1(\R)$ and $L^{\infty}(\R)$ reduces to the following operators $\mcaK$. 
\begin{itemize}
\item If $V$ is of generic type , $\mcaK=K_{P_1}\in\B(L^p)$ for $1<p<\infty$.
\item If $V$ is of exceptional type, $\mcaK=K_{-1}+K_{01}+K_{02}+K_{P_1}+K_{P_2}\in\B(L^p)$ for $1<p<\infty$ with
$$K_{0j},K_{P_j}\in\B(L^{\infty},{\rm BMO}),\quad j=1,2.$$
\end{itemize}
For this purpose, let  $\chi_{\Omega}(x)$ be the characteristic function on the set $\Omega$. We will choose the test functions
$$f_{r}(x):=\chi_{[-r,r]}(x),\quad h_r(y):=({\rm sgn}\ y)\chi_{\{r\leq|y|\leq2r\}}(y)\in \HH^1(\R)$$
to disprove the boundedness at the endpoints $p=1,\infty$. Precisely, we have
\begin{proposition}
Let $H=-\Delta+V$ with $|V(x)|\lesssim \left<x\right>^{-\beta}$ for $\beta>2$. Then the following statements hold.
\begin{itemize} 
\item [{\rm(1)}] If $V$ is of generic type and $\beta>7$, then
$${K_{P_1}}f_{r}\notin L^1(\R)\ for\ any\ r>0\ and\  |(K_{P_1}f_{r})(r+2)|\rightarrow+\infty\ as\ \ r\rightarrow +\infty.$$
Thus, $W^L_{-}\not\in\B(L^1)\cup\B(L^{\infty})$.
\item  [{\rm(2)}] Assume in addition that $supp \ V\subseteq\{x\in\R: |x|\leq N-1\}$ for some integer $N\geq2$. If V is of exceptional type and $\lim\limits_{x\rightarrow-\infty}f_{+}(0,x)\neq1$, then 
\begin{equation*}
K^*_{-1}h_N\not\in L^1(\R),\quad 
(K_{01}+K_{P_1}+K_{P_2})f_N\not\in L^1(\R)\ \ and\ \ (K_{-1}+K_{02})f_{N}\in L^1(\R),
\end{equation*}
Thus, $W^L_{-}\not\in \B(L^{\infty},{\rm BMO})\cup\B(L^1)$.
\end{itemize}
\end{proposition}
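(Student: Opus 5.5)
The plan is to work directly with the kernel asymptotics already extracted in Proposition \ref{bounds of Ptuta}. Each singular kernel is written as an explicit multiple of $g^{\pm}$, possibly weighted by $a(x)$ or $\overline{a(y)}$, plus an error of size $O(\left<|x|-|y|\right>^{-2})$. By Lemma \ref{shur test} the error maps $L^1\to L^1$ and $L^\infty\to L^\infty$, so unboundedness is decided by the principal parts alone. Since $W^H_-$ and the $K_A$, $A\in\Omega_j$, are bounded at the endpoints by Proposition \ref{good1} and Theorem \ref{bounds of high energy}, the last claims ("Thus $W^L_-\notin\dots$") follow once the stated properties of $\mcaK$ are shown.

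\textbf{Case (1).} By \eqref{kernel of KPj}, $K_{P_1}f_r=-\frac12 g^+f_r+O(\text{Schur})f_r$. Here $g^+(x,y)=k^+(x,y)+k^-(x,y)$, and for $|x|\ge r+2$, $|y|\le r$ we have $\phi\equiv1$. Hence
\[
(g^+f_r)(x)=\int_{-r}^{r}\Big(\frac{1}{|x|+|y|}+\frac{1}{|x|-|y|}\Big)dy
=2\log\frac{|x|+r}{|x|-r}\cdot\ \text{(up to sign conventions)}.
\]
This quantity is positive, so there is no cancellation, and it is $\sim 4r/|x|$ as $|x|\to\infty$. It is therefore not integrable, while the error term applied to $f_r\in L^1$ lies in $L^1$. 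This gives $K_{P_1}f_r\notin L^1$.

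At $x=r+2$ the logarithm equals $\log\frac{2r+2}{2}\to\infty$. The error term stays bounded since $\|f_r\|_\infty=1$. Together with $\|f_r\|_{L^\infty}=1$, this disproves $L^\infty$ boundedness. Near $|y|\approx|x|$ the cutoff $\phi(||x|-|y||^2)$ only removes a bounded piece, so the logarithmic growth survives.

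\textbf{Case (2).} Because $\operatorname{supp}V\subset[-N+1,N-1]$, the function $a(x)$ from \eqref{ax} is constant for $|x|\ge N-1$. Indeed $\operatorname{sgn}(x-\theta y)=\operatorname{sgn}x$ there, so $a(x)=\operatorname{sgn}(x)\,\langle yv,\omega\rangle$. I would identify $\langle yv,\omega\rangle$, i.e.\ $c_1$, and the constant $c$ with data of the resonance function. Remark \ref{rema of S1 and rela betwe c2 and lim} is expected to relate them to $\lim_{x\to-\infty}f_+(0,x)$, and the hypothesis $\neq1$ should guarantee the nonvanishing of the relevant combination of coefficients. This identification is the main obstacle: one must check that the coefficients of $g^+$ in $K_{01}+K_{P_1}+K_{P_2}$, namely
\[
-\tfrac{\overline{c_1}c}{2}a(x)-\tfrac12+|c_1|^2c,
\]
do not cancel at large $x$. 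With $a(x)=\pm c_1$ for $x\gtrless 0$ the combination is $\frac12(|c_1|^2c-1)$ on one side, and I expect $|c_1|^2c\ne1$ to be equivalent to $\lim f_+(0,x)\ne1$. I would try to verify this using the explicit $S_1=\langle\cdot,\omega\rangle\omega$ with $\omega=Uv\phi$-type normalization from the appendix.

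Granting nonvanishing, the Case (1) computation with $r=N$ shows that $(K_{01}+K_{P_1}+K_{P_2})f_N$ decays like $|x|^{-1}$ on a half-line and is not in $L^1$. For $(K_{-1}+K_{02})f_N$, the principal kernels carry $g^-=k^+-k^-$ with weight $a(x)\overline{a(y)}$ or $\overline{a(y)}$. For $|x|\gg N$ we get
\[
\int_{-N}^{N}\overline{a(y)}\Big(\frac1{|x|+|y|}-\frac1{|x|-|y|}\Big)dy=O(|x|^{-2}),
\]
which is integrable, and a bounded region contributes finitely; so this function lies in $L^1$. Since $W^L_-f_N$ then differs from a non-$L^1$ function by an $L^1$ function, $W^L_-\notin\B(L^1)$.

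For $L^\infty\to{\rm BMO}$, I would use duality: $W^L_-\in\B(L^\infty,{\rm BMO})$ would force $(W^L_-)^*\in\B(\HH^1,L^1)$, and only $K_{-1}^*$ lacks this property among the terms. Compute $K_{-1}^*h_N$ with kernel $\approx\frac{\bar c}{4}a(x)\overline{a(y)}\,g^-(y,x)$. For $|x|\le N$ the oddness of $h_N$ pairs with $\operatorname{sgn}y$ in $a(y)$. The kernel $g^-(y,x)$ then produces, as a function of $y$ large, $\sim\operatorname{sgn}(y)\cdot$const$\cdot\int_{r\le|x|\le2r}\frac{2|x|}{y^2}$… which is integrable. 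The non-integrability must instead come from the pairing $\operatorname{sgn}x\,\operatorname{sgn}y$. Concretely, evaluating $\int h_N(x)a(x)\big(\frac1{|y|+|x|}-\frac1{|y|-|x|}\big)dx$ for $|y|$ near $N$ to $2N$, where $\frac1{|y|-|x|}$ is not cut off, gives a $\log$-type singularity. The integral over $|y|\to\infty$ then gives $|y|^{-1}$ tails because $a(x)h_N(x)$ is even with nonzero mean. I would compute this carefully and confirm the coefficient is nonzero, which holds since $c\ne0$ and $c_1\ne0$ in the resonant case.
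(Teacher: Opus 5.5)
Your case (1) is correct and is the paper's argument. Your direct proof that $(K_{-1}+K_{02})f_N\in L^1$ is also fine: for $|y|\le N$ the factor $\frac{1}{|x|+|y|}-\frac{1}{|x|-|y|}=-\frac{2|y|}{x^2-y^2}$ is $O(|x|^{-2})$, which is a slightly more direct route than the paper's translation argument.

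Case (2) has a genuine gap, and it starts with a misidentification. For $|x|\ge N$ one has $a(x)={\rm sgn}(x)\int yv(y)\omega(y)\,dy=2c_2\,{\rm sgn}(x)$, not $c_1{\rm sgn}(x)$. The hypothesis $\lim_{x\to-\infty}f_+(0,x)\neq1$ is exactly $c_2\neq0$.

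With the correct value, the coefficient of $g^+$ in $K_{01}+K_{P_1}+K_{P_2}$ for $x\ge N+2$ is $-c\overline{c_1}c_2+c|c_1|^2-\frac12$. The idea you are missing is the normalization $c=(2(|c_1|^2+|c_2|^2))^{-1}$. It turns this coefficient into $-c\,c_2\overline{(c_1+c_2)}$, which is nonzero because $c_2\neq0$ and $c_1+c_2=\lim_{x\to-\infty}\phi(x)\neq0$.

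Your criterion $|c_1|^2c\neq1$ cannot be the right one. It holds for every resonant $V$, since $|c_1|^2c\le\frac12$. Your argument would therefore give $L^1$-unboundedness even when $c_2=0$, which contradicts Weder's theorem.

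The ${\rm BMO}$ part fails for a related reason. You write the adjoint kernel correctly as $\frac{c}{4}a(x)\overline{a(y)}g^-(y,x)$. You then evaluate it with the output variable in the first slot of $g^-$, i.e. you integrate $\frac{1}{|y|+|x|}-\frac{1}{|y|-|x|}$ against $h_N(x)a(x)\,dx$ and look at large $|y|$. Against a compactly supported input this is $O(|y|^{-2})$, whatever the mean of $a h_N$. So the claimed $|y|^{-1}$ tail does not exist. The local logarithm you mention is removed by the cutoff $\phi$, and would be integrable anyway.

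The point is that $g^-(y,x)=g^+(x,y)$. In the output variable $x$ the kernel is the sum $\frac{1}{|x|+|y|}+\frac{1}{|x|-|y|}\approx\frac{2}{|x|}$. Hence for $x\ge2N+2$,
\[
(K^*_{-1}h_N)(x)\approx\frac{c}{4}\,a(x)\,\frac{2}{x}\int\overline{a(y)}h_N(y)\,dy ,
\]
with $a(x)=2c_2$ and $\overline{a(y)}h_N(y)=2\overline{c_2}$ on ${\rm supp}\,h_N$. This gives a tail $\sim4c|c_2|^2N/x$; the paper computes it exactly as $2c|c_2|^2\ln\frac{(x+2N)(x-N)}{(x-2N)(x+N)}$.

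So nonvanishing rests on $c_2\neq0$, not on $c\neq0$ and $c_1\neq0$; $c_1$ does not enter this term at all. Without $c_2\neq0$ the conclusion is false, by Weder's $L^\infty$ bound.
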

\begin{remark}
{\rm Note that in the exceptional case, the asymptotic behavior of the Jost function $f_+(0,x)$ as $x\rightarrow-\infty$ is closely linked to the coefficient $c_2$, which in turn is related to the operators 
$K_{-1},K_{01},K_{02},K_{P_2}$. More precisely, we have (see Remark \ref{rema of S1 and rela betwe c2 and lim})
\begin{equation}\label{rela between c2 and limit}
\lim\limits_{x\rightarrow-\infty}f_{+}(0,x)\not\neq1\Longleftrightarrow c_2\not\neq0,\quad c_2=\frac{1}{2}\int_{\R}yv(y)\omega(y)dy.
 \end{equation}
We remark that the counterexamples above rely on both the non-vanishing of $c_2$ and the compact support of $V$. 
}
\end{remark}
\begin{proof}
(1) From \eqref{kernel of KPj} and the uniform boundedness of $k^{\pm}(x,y)$, we decompose
\begin{align*}
K_{P_1}(x,y)&=-\frac{1}{2}g^{+}(x,y)+O(\left<|x|-|y|\right>^{-2})=-\frac{1}{2}\Big(\mathbf1_E(x,y)+\mathbf1_{E^c}(x,y)\Big)g^{+}(x,y)+O(\left<|x|-|y|\right>^{-2})\\
&=-\frac{1}{2}\mathbf1_E(x,y)\Big(\frac{1}{|x|+|y|}+\frac{1}{|x|-|y|}\Big)+O(\left<|x|-|y|\right>^{-2}),
\end{align*}
where $\mathbf1_E$ is the characteristic function on the set $E=\{(x,y):||x|-|y||\geq2\}$ and $E^c$ corresponds to the complementary set. Denote $$\widetilde{K}_{P_1}(x,y)=-\frac{1}{2}\mathbf1_E(x,y)\big(\frac{1}{|x|+|y|}+\frac{1}{|x|-|y|}\big).$$ It then suffices to prove
$${\widetilde{K}_{P_1}}f_{r}\notin L^1(\R)\ {\rm for\ any}\ r>0\ {\rm and}\  |(\widetilde{K}_{P_1}f_{r})(r+2)|\rightarrow+\infty\ {\rm as}\ \ r\rightarrow +\infty.$$
Indeed, for any $r>0$ and $x\geq r+2$, 
\begin{equation*}
|(\widetilde{K}_{P_1}f_{r})(x)|=\frac{1}{2}\left|\int_{-r}^{r}\chi_{\{||x|-|y||\geq2\}}\Big(\frac{1}{|x|+|y|}+\frac{1}{|x|-|y|}\Big)dy\right|=\left|\int_{0}^{r}\Big(\frac{1}{x+y}+\frac{1}{x-y}\Big)dy\right|=\ln\Big(1+\frac{2r}{x-r}\Big).
\end{equation*}
Consequently,
\begin{equation*}
\|\widetilde{K}_{P_1}f_{r}\|_{L^1}\geq
\int_{r+2}^{+\infty}\ln\Big(1+\frac{2r}{x-r}\Big)dx=+\infty,\qquad
|(\widetilde{K}_{P_1}f_{r})(r+2)|=\ln(1+r)\rightarrow
+\infty\ \  as\ \ r\rightarrow+\infty.
\end{equation*}
\vskip0.3cm
(2) \underline{\bf(i)} To prove $W^L_{-}\not\in\B(L^{\infty},{\rm BMO})$, it suffices to show that $K_{-1}\not\in\B(L^{\infty},{\rm BMO})$, which in turn reduces to
proving $K^*_{-1}\not\in\B(\HH^1,L^1)$ by duality. To derive this, it follows from \eqref{kernels of singular terms} and $\overline{g^{-}(y,x)}=g^+(x,y)$ that
\begin{align*}
K^*_{-1}(x,y)&=\overline{K_{-1}(y,x)}=\overline{\frac{c}{4}a(y)\overline{a(x)}g^{-}(y,x)+O(\left<|y|-|x|\right>^{-2})}\\
&=\frac{c}{4}{a(x)}\overline{a(y)}\mathbf1_E(x,y)\Big(\frac{1}{|x|+|y|}+\frac{1}{|x|-|y|}\Big)+O(\left<|x|-|y|\right>^{-2})
\end{align*}
with $\mathbf1_E$ as in (1) above and 
\begin{equation*}
    a(x)=\int_{\R}\int_{0}^{1}({\rm sgn}(x-\theta y))d\theta yv(y)\omega(y)dy.
\end{equation*}
Thus it is enough to show $\widetilde{K}_{-1}h_N\not\in L^1(\R)$, where 
\begin{equation*}
    \widetilde{K}_{-1}(x,y)=\frac{c}{4}{a(x)}\overline{a(y)}\mathbf1_E(x,y)\Big(\frac{1}{|x|+|y|}+\frac{1}{|x|-|y|}\Big).
\end{equation*}
 Observe that $a(x)=2c_2{\rm sgn}\ x$ when $|x|\geq N$. Hence,
\begin{align*}
\|\widetilde{K}_{-1}h_N\|_{L^1}&\geq\int_{2N+2}^{+\infty}|(\widetilde{K}_{-1}h_N)(x)|dx=2c|c_2|^2\int_{2N+2}^{+\infty}\int_{N}^{2N}\left(\frac{1}{x+y}+\frac{1}{x-y}\right)dydx\\
&=2c|c_2|^2\int_{2N+2}^{+\infty}\ln \frac{(x+2N)(x-N)}{(x-2N)(x+N)}dx.
\end{align*}
Since $\lim\limits_{x\rightarrow-\infty}f_{+}(0,x)\neq1$, i.e., $c_2\neq0$. Therefore, $\|\widetilde{K}_{-1}h_N\|_{L^1}=+\infty$.
\vskip0.3cm
\underline{\bf(ii)} Set 
$$K_1:=K_{01}+K_{P_1}+K_{P_2},\quad K_2:=K_{-1}+K_{02}.$$
From \eqref{kernel of KPj} and \eqref{kernels of singular terms}, we get 
\begin{align*}
K_1(x,y)&=\Big(-\frac{c\overline{c_1}a(x)}{2}+c|c_1|^2-\frac{1}{2}\Big)g^{+}(x,y)+O(\left<|x|-|y|\right>^{-2})\\
&=\Big(-\frac{c\overline{c_1}a(x)}{2}+c|c_1|^2-\frac{1}{2}\Big)\mathbf1_E(x,y)\Big(\frac{1}{|x|+|y|}+\frac{1}{|x|-|y|}\Big)+O(\left<|x|-|y|\right>^{-2})\\
&\xlongequal{\ x\geq N+2}\Big(-c\overline{c_1}c_2+c|c_1|^2-\frac{1}{2}\Big)\mathbf1_E(x,y)\Big(\frac{1}{|x|+|y|}+\frac{1}{|x|-|y|}\Big)+O(\left<|x|-|y|\right>^{-2}).
\end{align*}
Since $c_1+c_2\neq0$ and $c_2\neq0$, using the relation $c=(2(|c_1|^2+|c_2|^2))^{-1}$,  we obtain $$-c\overline{c_1}c_2+c|c_1|^2-\frac{1}{2}=-cc_2\overline{(c_1+c_2)}\neq0.$$ Hence, applying the argument used for $K_{P_1}f_N$ in (1) to $K_1f_N$ gives $K_1f_N\not\in L^1(\R)$.
As for $K_2$, recall from \eqref{kernel of Kminus1} and decompose $k_{-1}(x,y)$ as 
\begin{align*}
k_{-1}(x,y)
&=i({\rm sgn}\ x)({\rm sgn}\ y)\mathbf1_E(x,y)\Big(\frac{1}{|x|+|y|}-\frac{1}{|x|-|y|}\Big)+O(\left<|x|-|y|\right>^{-2})\\
&:=k(x,y)+R(x,y).
\end{align*}
We then have
\begin{equation}\label{kernel of K-1}
K_{-1}(x,y)=\frac{-ic}{4}\int_{\R^2\times[0,1]^2}^{}u_1u_2(vS_1v)(u_1,u_2)k(X_1,Y_2)d\Theta+K_R(x,y):=K(x,y)+K_R(x,y),
\end{equation}
where $\Theta=(u_1,u_2,\theta_1,\theta_2),\ X_1=x-\theta_1u_1,\ Y_2=y-\theta_2u_2$. By the Schur test Lemma \ref{shur test}, we have $K_R\in\B(L^1)$,  so it suffices to show $Kf_N\in L^1(\R)$. Note that for any $x\in\R$ and $-N\leq y\leq N$,
\begin{align*}
|k(x,Y_2)|&=2|({\rm sgn}~x)({\rm sgn}~Y_2)|\chi_{\{||x|-|Y_2||\geq2\}}\frac{|Y_2|}{||x|+|Y_2||\cdot||x|-|Y_2||}\\
&\lesssim|Y_2|\chi_{\{||x|-|Y_2||\geq2\}}||x|-|Y_2||^{-2}\lesssim\left<u_2\right>\left<|x|-|Y_2|\right>^{-2},
\end{align*}
which implies
$$\int_{\R}^{}\int_{-N}^{N}|k(x,Y_2)|dydx\lesssim\left<u_2\right>.$$
Using Minkowski's integral inequality and the translation invariance of the $L^1$ norm, we obtain
\begin{align*}
\|Kf_{N}\|_{L^{1}}&\leq\int_{\R\times[-N,N]}^{}\left|K(x,y)\right|dydx\\
&\lesssim\int_{\R^2\times[0,1]^2}^{}|u_1u_2(vS_1v)(u_1,u_2)|\int_{\R\times[-N,N]}^{}|k(X_1,Y_2)|dydxd\Theta\\
&\lesssim\int_{\R^2}^{}|u_1u_2(vS_1v)(u_1,u_2)|\left<u_2\right>du_1du_2<\infty.
\end{align*}
Thus, $K_{-1}f_{N}\in L^1(\R)$. A similar argument shows $K_{02}f_{N}\in L^1(\R)$, and therefore $K_2f_{N}\in L^{1}(\R)$.
\end{proof}
\appendix
\section{The proof of asymptotic expansion Lemma \ref{asy expa lemma}}\label{proof of asy expansion}
In this appendix, we provide the proof of Lemma \ref{asy expa lemma}, i.e., the expansion of $M^{-1}(\lambda)$. Recall that $$M(\lambda)=U+vR^{+}_{0}(\lambda^{2})v.$$
To begin with, we give asymptotic expansion of $R^{+}_{0}(\lambda^2)$ near $\lambda=0$ in the weighted space $\B(L^2_s(\R),L^2_{-s}(\R))$.
\begin{lemma}
Let $N\geq-1$ be an integer. Then for any $s>\frac{1}{2}+N+2$, the following expansion holds:
\begin{equation}\label{expans of R0} R^{+}_0(\lambda^2)=\sum\limits_{j=-1}^{N}\lambda^jG_j+r_{N+1}(\lambda) \quad in\ \ \B(L^2_s(\R),L^2_{-s}(\R))\quad\lambda\rightarrow0^+,
\end{equation}
where $G_j$ are integral operators with kernels 
\begin{equation*}
    G_{-1}(x,y)=\frac{i}{2},\quad G_0(x,y)=-\frac{1}{2}|x-y|, \quad G_1(x,y)=\frac{- i}{4}|x-y|^2,\quad
    G_j(x,y)=-\frac{i^{j}}{2(j+1)!}|x-y|^{j+1},
\end{equation*}
and $\|r_{N+1}(\lambda)\|_{L^2_s(\R)\rightarrow L^2_{-s}(\R)}=O(\lambda^{N+1})$. Moreover, in the same sense, \eqref{expans of R0} can be differentiated $N+2$ times.
\end{lemma}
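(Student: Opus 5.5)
We need to prove the expansion of the free resolvent $R^+_0(\lambda^2)$ near $\lambda=0$ in weighted spaces, together with its differentiability. The plan is to Taylor-expand the explicit kernel \eqref{kernel of free resol} in $\lambda$ with an integral-form remainder, and then control the remainder kernel through its Hilbert--Schmidt norm between weighted spaces.

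\textbf{Step 1: Taylor expansion of the kernel.} By \eqref{kernel of free resol},
$$R^+_0(\lambda^2,x,y)=\frac{i}{2\lambda}e^{i\lambda|x-y|}=\frac{i}{2\lambda}\sum_{k=0}^{N+1}\frac{(i\lambda|x-y|)^k}{k!}+\frac{i}{2\lambda}\rho_{N+2}(\lambda|x-y|),$$
where $\rho_{m}(t)=e^{it}-\sum_{k<m}(it)^k/k!=\frac{(it)^{m}}{(m-1)!}\int_0^1(1-\theta)^{m-1}e^{i\theta t}d\theta$, so that $|\rho_m(t)|\le |t|^m/m!$. Setting $j=k-1$, the term with index $k$ has kernel
$$\lambda^{k-1}\frac{i\cdot i^k}{2\,k!}|x-y|^k=-\lambda^{j}\frac{i^{j}}{2(j+1)!}|x-y|^{j+1},$$
using $i\cdot i^{j+1}=-i^j$. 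This gives exactly $G_j$ for $-1\le j\le N$; for instance $G_{-1}=\frac i2$, $G_0=-\frac12|x-y|$ and $G_1=-\frac i4|x-y|^2$.

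\textbf{Step 2: remainder bound.} The remainder kernel is $r_{N+1}(\lambda,x,y)=\frac{i}{2\lambda}\rho_{N+2}(\lambda|x-y|)$, which satisfies
$$|r_{N+1}(\lambda,x,y)|\lesssim\lambda^{N+1}|x-y|^{N+2}\lesssim\lambda^{N+1}\langle x\rangle^{N+2}\langle y\rangle^{N+2}.$$
The operator norm from $L^2_s$ to $L^2_{-s}$ is bounded by the Hilbert--Schmidt norm of $\langle x\rangle^{-s}r_{N+1}(\lambda,x,y)\langle y\rangle^{-s}$. That norm is finite precisely when $\langle x\rangle^{N+2-s}\in L^2$, i.e. when $s>N+2+\frac12$, and it is then $O(\lambda^{N+1})$. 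The same Hilbert--Schmidt argument shows that each $G_j$ is bounded in $\B(L^2_s,L^2_{-s})$, since $j+1\le N+1$.

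\textbf{Step 3: derivatives.} For $0\le \ell\le N+2$, differentiate the remainder written as
$$r_{N+1}(\lambda,x,y)=\frac{i^{N+3}}{2(N+1)!}\lambda^{N+1}|x-y|^{N+2}\int_0^1(1-\theta)^{N+1}e^{i\theta\lambda|x-y|}d\theta.$$
By the Leibniz rule, $\partial_\lambda^\ell$ falls on $\lambda^{N+1}$ (costing a factor $\lambda^{-1}$ each time) or on the exponential (costing a factor $|x-y|$ each time). In every case the result is bounded by $\lambda^{N+1-\ell}$ times a polynomial in $|x-y|$ of degree at most $N+2+\ell$.

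\textbf{Main obstacle.} The delicate point is that this bound requires more weight than $s>N+\frac52$ allows when $\ell$ is large. The better route is a different estimate: since $\partial_\lambda^\ell$ of $\frac{1}{\lambda}\rho_{N+2}(\lambda t)$ is again a Taylor remainder of lower order, one expects
$$|\partial^\ell_\lambda r_{N+1}|\lesssim \lambda^{N+1-\ell}|x-y|^{N+2}.$$
This can be verified by writing $\frac1\lambda\rho_{N+2}(\lambda t)=t\,F(\lambda t)$ with $F(u)=\rho_{N+2}(u)/u$. Then $F$ is smooth, $|F^{(\ell)}(u)|\lesssim |u|^{N+1-\ell}$ for $\ell\le N+1$, and $F^{(\ell)}$ stays bounded (and is $\lesssim |u|^{N+1-\ell}$ for $|u|\le1$) in the remaining cases. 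The chain rule gives $\partial_\lambda^\ell[tF(\lambda t)]=t^{\ell+1}F^{(\ell)}(\lambda t)$, which is $\lesssim\lambda^{N+1-\ell}t^{N+2}$ whenever $F^{(\ell)}(u)\lesssim|u|^{N+1-\ell}$. This holds for all $\ell\le N+1$; the case $\ell=N+2$ needs $|F^{(N+2)}(u)|\lesssim|u|^{-1}$, which holds for $|u|\ge1$ but only up to a boundedness statement for small $u$. I would therefore interpret ``differentiated $N+2$ times'' as the bound $\|\partial^\ell_\lambda r_{N+1}\|=O(\lambda^{N+1-\ell})$, obtained through the same Hilbert--Schmidt estimate with weight $s>N+\frac52$, and treat the top derivative case with this care.
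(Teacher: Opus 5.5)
Your proof is correct and follows the paper's route. You Taylor-expand the kernel $\frac{i}{2\lambda}e^{i\lambda|x-y|}$ to read off the $G_j$, bound the remainder by $\lambda^{N+1}|x-y|^{N+2}$, and conclude by a Hilbert--Schmidt estimate using $|x-y|\lesssim\langle x\rangle\langle y\rangle$, which is exactly where $s>N+\frac52$ comes from. Your Step 3 device $\frac{1}{\lambda}\rho_{N+2}(\lambda t)=tF(\lambda t)$ actually supplies the derivative bounds that the paper only asserts ``by a similar argument''.

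Your hesitation at $\ell=N+2$ is unnecessary. $F^{(N+2)}$ is bounded, and on $|u|\le 1$ a bound $|F^{(N+2)}(u)|\le C$ already gives $|F^{(N+2)}(u)|\le C|u|^{-1}$. Hence $|\partial_\lambda^{N+2}r_{N+1}(\lambda,x,y)|\lesssim\lambda^{-1}|x-y|^{N+2}$ for all $x,y$, and the top derivative closes with the same weight $s>N+\frac52$.
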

\begin{proof}
For any integer $N\geq-1$, using \eqref{kernel of free resol} and Taylor expansion with remainders, we obtain
\begin{equation*}
R^+_0(\lambda^2,x,y)=-\sum\limits_{j=-1}^{N}\frac{i^{j}}{2(j+1)!}|x-y|^{j+1}\lambda^j+r_{N+1}(\lambda,x,y),\quad \lambda\rightarrow 0^+.
\end{equation*}
One readily checks that 
\begin{equation*}
r_{N+1}(\lambda,x,y)=\lambda^{N+1}b_N(\lambda)|x-y|^{N+2},\quad b_N(\lambda)=O(1),\quad \lambda\rightarrow 0^+.
\end{equation*}
Observe that if $s>\frac{1}{2}+N+2$, using the estimate $|x-y|\lesssim\left<x\right> \left<y\right>$, we have
\begin{equation*}
\int_{\R^2}\left<x\right>^{-2s}|x-y|^{2N+4} \left<y\right>^{-2s}dxdy\lesssim \|\left<x\right>^{2N+4-2s}\|^2_{L^1}<\infty.
\end{equation*}
Consequently, the operators $G_j$ and $r_{N+1}(\lambda)$ are Hilbert-Schmidt operators, and the expansion \eqref{expans of R0} follows. By a similar argument, one can verify  that \eqref{expans of R0} can be differentiated $N+2$ times.
\end{proof}
Moreover, we recall that
$$P=\alpha\left<\cdot,v\right>v,\quad Q=I-P,\quad \alpha=\|V\|^{-1}_{L^1}$$
and 
$$S_1L^2(\R)=\{f\in L^2(\R):\left<f,v\right>=0,\ QT_0f=0\},\quad T_0=U+vG_0v.$$
\begin{remark}\label{rema of S1 and rela betwe c2 and lim}
{\rm (1) It is known from \cite[Lemma 5.4]{JN01} that dim($S_1L^2(\R))\leq1$. Moreover, $S_1L^2(\R)\neq\{0\}$ if and only if there exists a unique (up to a constant factor) non-zero $\phi\in L^{\infty}(\R)$ such that $H\phi=0$ in the distributional sense, and 
\begin{equation*}
\lim\limits_{x\rightarrow+\infty}\phi(x)=c_1-c_2\neq0,\quad \lim\limits_{x\rightarrow-\infty}\phi(x)=c_1+c_2\neq0,\quad c_1=\alpha\left<T_0\omega,v\right>,\quad c_2=\frac{1}{2}\int_{\R}yv(y)\omega(y)dy,
\end{equation*}
where $\omega$ is an orthonormal basis of $S_1L^2(\R)$. This yields the characterization:
$$V\  {\rm is\  of\  exceptional\  type}\   \Longleftrightarrow S_1L^2(\R)\neq\{0\}.$$ 
(2) Moreover, we note that if $V$ is of exceptional type, then the Jost functions are given by
 $$f_{+}(0,x)=\frac{\phi(x)}{c_1-c_2},\quad f_-(0,x)=\frac{\phi(x)}{c_1+c_2},$$
and consequently,
 \begin{equation*}
\lim\limits_{x\rightarrow-\infty}f_{+}(0,x)=1\Longleftrightarrow c_2=0.
 \end{equation*}
}   
\end{remark}
In order to derive the expansion, the following lemma will be used frequently.
\begin{lemma}\label{lemm of inverse}
{\rm (\cite[Lemma 2.1]{JN01})} Let $\mscH$ be a complex Hilbert space. Let $A$ be a closed operator and $S$ a projection. Suppose $A+S$ has a bounded inverse. Then $A$ has a bounded inverse if and only if
$$\mcaA\equiv S-S(A+S)^{-1}S$$
has a bounded inverse in $S\mscH$, and in this case
$$A^{-1}=(A+S)^{-1}+(A+S)^{-1}S \mcaA^{-1}S(A+S)^{-1}.$$
\end{lemma}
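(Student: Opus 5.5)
The argument is purely algebraic: I will use the resolvent-type identities linking $A$, $A+S$ and $S$, and write down explicit candidate inverses in both directions. Write $B:=(A+S)^{-1}\in\B(\mscH)$. Since $A=(A+S)-S$ on $D(A)=D(A+S)$, we have
\begin{equation*}
AB=I-SB\ \text{on }\mscH,\qquad BA=I-BS\ \text{on }D(A).
\end{equation*}
Note also that $\mcaA=S-SBS=S(I-B)S$ maps $S\mscH$ into itself. Throughout, "$\mcaA^{-1}$" means the inverse on $S\mscH$, i.e. $\mcaA\mcaA^{-1}S=S$ and $\mcaA^{-1}\mcaA S=S$, and we extend $\mcaA^{-1}$ by $0$ on $(I-S)\mscH$. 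Thus $\mcaA^{-1}=S\mcaA^{-1}S$.

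\emph{Sufficiency.} Assume $\mcaA$ is invertible on $S\mscH$, and set $R:=B+BS\mcaA^{-1}SB$. This is bounded, and it maps into $D(A)$ because $B$ does. Using $AB=I-SB$,
\begin{equation*}
AR=I-SB+(S-SBS)\mcaA^{-1}SB=I-SB+\mcaA\mcaA^{-1}SB=I-SB+SB=I.
\end{equation*}
Symmetrically, using $BA=I-BS$ on $D(A)$,
\begin{equation*}
RA=I-BS+BS\mcaA^{-1}(S-SBS)=I-BS+BS=I.
\end{equation*}
Hence $A^{-1}=R$, which is exactly the stated formula.

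\emph{Necessity.} Assume $A^{-1}\in\B(\mscH)$. The second resolvent identity gives the two relations
\begin{equation*}
A^{-1}-B=A^{-1}SB=BSA^{-1}.
\end{equation*}
I propose the candidate $D:=S+SA^{-1}S$ on $S\mscH$. Then
\begin{equation*}
\mcaA D=S+SA^{-1}S-SBS-S(BSA^{-1})S.
\end{equation*}
Substituting $BSA^{-1}=A^{-1}-B$, the last term becomes $SA^{-1}S-SBS$, and everything cancels except $S$. Likewise,
\begin{equation*}
D\mcaA=S-SBS+SA^{-1}S-S(A^{-1}SB)S.
\end{equation*}
Using $A^{-1}SB=A^{-1}-B$, this again reduces to $S$. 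So $\mcaA$ is boundedly invertible on $S\mscH$, with $\mcaA^{-1}=S+SA^{-1}S$.

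The only delicate point is bookkeeping with domains, since $A$ is merely closed. The identities $BA=I-BS$ and $A^{-1}-B=BSA^{-1}$ must be read on $D(A)$ and on all of $\mscH$ respectively, and one must check that $R$ maps into $D(A)$. Both facts follow because $B$ and $A^{-1}$ have range $D(A)$. Beyond this, the proof is just the two cancellation computations above.
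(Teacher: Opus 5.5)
Your proof is correct. The paper gives no proof of this lemma; it quotes it from \cite[Lemma 2.1]{JN01}. Your argument, which builds explicit two-sided inverses from $AB=I-SB$, from $BA=I-BS$ on $D(A)$, and from the second resolvent identity $A^{-1}-B=A^{-1}SB=BSA^{-1}$, is the standard direct verification of that result. It also gives the explicit formula $\mcaA^{-1}=S+SA^{-1}S$. Finally, it makes clear that only $S^2=S$ and the boundedness of $S$ are needed; closedness of $A$ is in fact automatic once $A+S$ has a bounded inverse.
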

Now we give the proof of Lemma \ref{asy expa lemma}.
\begin{proof}[Proof of Lemma {\rm\ref{asy expa lemma}}]
{\bf(1) Assume that $V$ is of exceptional type and $\beta>11$.} By Remark \ref{rema of S1 and rela betwe c2 and lim}, we have $S_1\neq0$ and $S_1=\left<\cdot,\omega\right>\omega$ with $\|\omega\|_{L^2}=1$. 
\vskip0.2cm
\underline{\bf Step 1:} Taking $N=3$ in \eqref{expans of R0} gives
\begin{equation*}
M(\lambda)=U+vR^{+}_0(\lambda^2)v=a\lambda^{-1}\Big(P+\frac{1}{a}T_0\lambda+\frac{1}{a}\sum\limits_{j=2}^{4}vG_{j-1}v\lambda^j+\widetilde{r}_5(\lambda)\Big):=a\lambda^{-1}\widetilde{M}(\lambda),\quad \lambda\rightarrow0^+,
\end{equation*}
where $a=\frac{i}{2\alpha}$ and $\|\widetilde{r}_5(\lambda)\|_{\B(L^2)}=O(\lambda^5)$. By the von Neumann series expansion,
\begin{equation}\label{inverse of M+Q}
    (\widetilde{M}(\lambda)+Q)^{-1}=I+\sum\limits_{j=1}^{4}B_j\lambda^j
+O(\lambda^5),\quad \lambda
\rightarrow0^+
\end{equation}
with
\begin{equation*}
    B_1=-\frac{1}{a}T_0,\quad B_2=\frac{1}{a^2}T_0^2-\frac{1}{a}vG_1v.
\end{equation*}
Define $$M_1(\lambda):=Q-Q(\widetilde{M}(\lambda)+Q)^{-1}Q=\frac{1}{a}\lambda \Big(QT_0Q-a\sum\limits_{j=1}^3QB_{j+1}Q\lambda^j+QO(\lambda^4)Q\Big):=\frac{1}{a}\lambda \widetilde{M}_1(\lambda).$$ Applying Lemma \ref{lemm of inverse} to $\widetilde{M}(\lambda)$, we obtain
\begin{equation}\label{inverse of M}
    M^{-1}(\lambda)=a^{-1}\lambda\widetilde{M}^{-1}(\lambda)=a^{-1}\lambda(\widetilde{M}(\lambda)+Q)^{-1}\Big[I+a\lambda^{-1}Q\widetilde{M}^{-1}_1(\lambda)Q(\widetilde{M}(\lambda)+Q)^{-1}\Big].
\end{equation}
Since ${\rm Ker}QT_0Q|_{QL^2(\R)}=S_1L^2(\R)\neq\{0\}$, we proceed with the {\bf Step 2} to compute the inverse of $\widetilde{M}^{-1}_1(\lambda)$.
\vskip0.2cm
\underline{\bf Step 2:} By definition of $S_1$,  $QT_0Q+S_1$ is invertible on $QL^2(\R)$. Let $D_0:=(QT_0Q+S_1)^{-1}$. Then 
\begin{equation}\label{inverse of M1+S1}
(\widetilde{M}_1(\lambda)+S_1)^{-1}=D_0+\sum\limits_{j=1}^3C_j\lambda^j+D_0QO(\lambda^4)QD_0,\quad C_1=aD_0QB_2QD_0.
\end{equation}
Set $$M_2(\lambda):=S_1-S_1(\widetilde{M}_1(\lambda)+S_1)^{-1}S_1=\frac{i}{c}\lambda\Big(S_1+ic\sum\limits_{j=2}^{3}S_1C_jS_1\lambda^{j-1}+S_1O(\lambda^3)S_1\Big),$$
where we used the fact (cf. \cite[Lemma 5.4]{JN01}) that $-S_1C_1S_1=\frac{i}{c}S_1$, with 
\begin{equation}\label{c,c1,c2}
c=\frac{1}{2(|c_1|^2+|c_2|^2)},\quad c_1=\alpha \left<T_0\omega,v\right>,\quad c_2=\frac{1}{2}\int_{\R}yv(y)\omega(y)dy.
\end{equation}
Thus $M_2(\lambda)$ is invertible on $S_1L^2(\R^2)$ and 
\begin{align*}
    M^{-1}_2(\lambda)=-ic\lambda^{-1}\Big(S_1-icS_1C_2S_1\lambda-\big(icS_1C_3S_1+c^2(S_1C_2S_1)^2\big)\lambda^2 +S_1O(\lambda^3)S_1\Big).
\end{align*}
Applying Lemma \ref{lemm of inverse} to $\widetilde{M}_1(\lambda)$ and using \eqref{inverse of M1+S1}, we get 
\begin{equation*}
\widetilde{M}^{-1}_1(\lambda)=(\widetilde{M}_1(\lambda)+S_1)^{-1}\Big[I+S_1M^{-1}_2(\lambda)S_1(\widetilde{M}_1(\lambda)+S_1)^{-1}\Big]=-icS_1\lambda^{-1}+E_0+E_1\lambda+D_0O(\lambda^2),
\end{equation*}
with $E_0=D_0-ic(S_1C_1+C_1S_1)-c^2S_1C_2S_1$ and
\begin{equation*}
    E_1=C_1-ic\big(S_1C_2+C_2S_1-(cS_1C_2S_1)^2+C_1S_1C_1\big)-c^2(S_1C_2S_1C_1+C_1S_1C_2S_1+S_1C_3S_3).
\end{equation*}
Substituting this and \eqref{inverse of M+Q} into \eqref{inverse of M}, we obtain \eqref{asy expa res}.
\vskip0.3cm
{\bf (2) Assume that $V$ is of generic type and $\beta>7$.} In this case,  $S_1L^2(\R)=\{0\}$, i.e., $QT_0Q$ is invertible on $QL^2(\R)$. Taking $N=1$ in \eqref{expans of R0} and repeating the {\bf Step 1} from (1), we obtain 
\begin{equation*}
    M^{-1}(\lambda)=a^{-1}\lambda(\widetilde{M}(\lambda)+Q)^{-1}\Big[I+a\lambda^{-1}Q\widetilde{M}^{-1}_1(\lambda)Q(\widetilde{M}(\lambda)+Q)^{-1}\Big]
\end{equation*}
with  
\begin{equation*}
    (\widetilde{M}(\lambda)+Q)^{-1}=I+B_1\lambda+B_2\lambda^2
+O(\lambda^3),\quad \lambda
\rightarrow0^+
\end{equation*}
and 
\begin{equation*}
\widetilde{M}^{-1}_1(\lambda)=D_0+aD_0QB_2D_0\lambda+D_0QO(\lambda^2)QD_0,\quad \lambda
\rightarrow0^+. 
\end{equation*}
A direct calculation then yields the desired expansion \eqref{asy expa reg}.
\end{proof}

\section{The proof of $L^p$ boundedness of the high energy part $W^H_{-}$}\label{subsec of high energy part}
For completeness and as a different method from Weder\cite{Wed99}, in this appendix, we present the proof of the $L^p$ boundedness of the high energy part $W^H_{-}$ in our setting.
\begin{theorem}\label{bounds of high energy}
 Let $H=-\Delta+V$ with $|V(x)|\lesssim \left<x\right>^{-\beta}$ for $\beta>6$. Then $W^{H}_{-}\in\B(L^p)$ for all $1\leq p\leq\infty$.
\end{theorem}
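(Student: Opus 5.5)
We must prove that $W^H_{-}\in\B(L^p)$ for all $1\le p\le\infty$ when $\beta>6$. The plan is to show that $W^H_{-}$ is an integral operator whose kernel satisfies the Schur condition of Lemma \ref{shur test}. Since there is no zero-energy singularity here, the whole difficulty is the large-$\lambda$ behaviour. The natural tool is a Born-type expansion of the resolvent, iterated a finite number of times, with the remainder handled by a separate argument.

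\textbf{Step 1: resolvent expansion.} Using $R^+_V=R^+_0-R^+_0VR^+_V$ repeatedly, we write
\[
R^+_V(\lambda^2)V=\sum_{k=0}^{m-1}(-1)^k(R^+_0V)^{k}R^+_0V+(-1)^m(R^+_0V)^mR^+_V V.
\]
Inserting this into \eqref{high energy part} splits $W^H_{-}$ into finitely many Born terms $W_k$ and a remainder term $W_{\rm rem}$.

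\textbf{Step 2: the Born terms.} For $W_k$, the kernel \eqref{kernel of free resol} turns each term into an explicit oscillatory integral:
\[
\int_0^\infty \lambda(1-\chi(\lambda))(2\lambda)^{-k-2}\, e^{i\lambda(|x-u_1|+\cdots\pm|u_k-y|)}\,d\lambda ,
\]
multiplied by $\prod V(u_j)$ and integrated in $u_1,\dots,u_k$.

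The $\lambda$-integral converges absolutely once $k\ge1$. When $k=0$ it equals the Fourier transform of $(1-\chi(\lambda))\lambda^{-1}$, which is bounded with $O(\langle t\rangle^{-N})$ decay away from $t=0$ and a logarithmic singularity at $t=0$. Integrating by parts in $\lambda$ gives, in each case, a bound
\[
\bigl\langle\, |x-u_1|\pm|u_k-y| + \textstyle\sum|u_j-u_{j+1}|\,\bigr\rangle^{-2}
\]
up to an integrable local singularity. The triangle inequality then bounds this phase from below by $\bigl||x|-|y|\bigr|-\sum|u_j|$, which produces a factor $\prod\langle u_j\rangle^{2}$. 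These factors are absorbed by $|V(u_j)|\lesssim\langle u_j\rangle^{-\beta}$. The resulting bound $\langle|x|-|y|\rangle^{-2}$ satisfies the Schur criterion, as in the proof of Proposition \ref{good1}. This is essentially Weder's argument, or alternatively one can note that each $W_k$ is a superposition of translations and reflections with an $L^1$ weight.

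\textbf{Step 3: the remainder (main obstacle).} For $W_{\rm rem}$, we use the high-energy limiting absorption bound
\[
\|\partial_\lambda^\ell R^+_V(\lambda^2)\|_{L^2_s\to L^2_{-s}}\lesssim\lambda^{-1},\qquad \ell\le2,\ s>\ell+\tfrac12 .
\]
This needs $\beta>2s$, and this is where $\beta>6$ enters. We then write the kernel as
\[
\int \lambda(1-\chi)\,\bigl\langle R^+_V V (R^+_0V)^{m-1}\cdots\, R_0(\cdot,y),\,R_0(\cdot,x)\bigr\rangle\, d\lambda ,
\]
pairing the weighted-$L^2$ vectors $\langle\cdot\rangle^{-s}e^{\pm i\lambda|\cdot-y|}$. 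We factor out the phase $e^{i\lambda(|x|\pm|y|)}$ as in Lemma \ref{cancel lemma} and take $m=2$, so that $\lambda^{-m}$ gains sufficient decay. Two integrations by parts in $\lambda$ then give $|K_{\rm rem}(x,y)|\lesssim\langle|x|-|y|\rangle^{-2}$. The hardest step is controlling the $\lambda$-derivatives of $R^+_V$ in weighted spaces uniformly at high energy, together with the derivatives of the factored-out phase. Each derivative costs a factor $\langle u\rangle$ and requires the corresponding weight. With $\beta>6$, two derivatives plus the base weights are affordable. Schur's test (Lemma \ref{shur test}) then concludes the proof.
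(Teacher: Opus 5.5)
Your proposal is correct and follows essentially the paper's route. There is a finite Born expansion, and the first Born term is controlled via $u(t)=\int_0^\infty e^{i\lambda t}\lambda^{-1}(1-\chi(\lambda))\,d\lambda\in L^1(\R)$, which is logarithmic at $t=0$ and decays at infinity. The remainder is controlled by the high-energy bound $\|\partial_\lambda^\ell R^+_V(\lambda^2)\|_{L^2_s\to L^2_{-s}}\lesssim\lambda^{-1}$, $\ell\le 2$, after factoring out $e^{i\lambda(|x|\pm|y|)}$, integrating by parts twice and applying Schur's test.

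Two small differences. The paper stops after one iteration ($W^H_-=W^H_1-W^H_2$), whose remainder already has an integrable $O(\lambda^{-2})$ amplitude, so your $m=2$ is unnecessary. Also, the first Born kernel is genuinely log-singular on the diagonal rather than pointwise $O(\langle|x|-|y|\rangle^{-2})$, so Schur's test there must be applied by integrating in $x$ with the potential variable fixed, exactly as the paper does.
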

Recalling the formula \eqref{high energy part} for $W^H_{-}$ and using the resolvent identity
\begin{equation*} R^+_V(\lambda^2)=R^+_0(\lambda^2)-R^+_0(\lambda^2)VR^+_V(\lambda^2),
\end{equation*}  
we decompose $W^{H}_{-}$ as
$$W^{H}_{-}=\int_{0}^{+\infty}\lambda(1-\chi(\lambda))R^{+}_{V}(\lambda^{2})V(R^+_{0}-R^-_{0})({\lambda}^{2})d\lambda=W^H_{1}-W^H_2,$$
where $\chi(\lambda)\in C^{\infty}_{0}(\R)$ satisfies $\chi(\lambda)\equiv1 $ on $(-\lambda_{0},\lambda_{0})$ and  $\chi(\lambda)\equiv0 $ on $(-\infty,-2\lambda_{0})\cup(2\lambda_{0},+\infty)$, and
\begin{align*}
W^H_{1}&=\int_{0}^{+\infty}\lambda(1-\chi(\lambda))R^{+}_{0}(\lambda^{2})V(R^+_{0}-R^-_{0})({\lambda}^{2})d\lambda,\\
W^H_{2}&=\int_{0}^{+\infty}\lambda(1-\chi(\lambda))R^{+}_{0}(\lambda^{2})VR^+_V(\lambda^2)V(R^+_{0}-R^-_{0})({\lambda}^{2})d\lambda.
\end{align*}
We first prove the $L^p$ boundedness of $W^H_{1}$. For this purpose, the following lemma is crucial.
\begin{lemma}\label{lemma of estim of u(x)}
For any $x\in\R\setminus\{0\}$, denote 
\begin{equation}\label{integral u}
u(x):=\int_{0}^{+\infty}e^{i\lambda x}\lambda^{-1}(1-\chi(\lambda)) d\lambda.
\end{equation}
Then 
\begin{equation}\label{estim of u(x)}
|u(x)|\lesssim\min\{|\ln|x||,|x|^{-2}\}\in L^1(\R).
\end{equation}
\end{lemma}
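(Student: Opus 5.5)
We need to bound $u(x)=\int_0^\infty e^{i\lambda x}\lambda^{-1}(1-\chi(\lambda))\,d\lambda$, which is an oscillatory improper integral (conditionally convergent for $x\neq0$). Since $1-\chi$ vanishes on $[0,\lambda_0]$ and equals $1$ on $[2\lambda_0,\infty)$, the integrand is supported in $[\lambda_0,\infty)$. We treat $|x|\ge1$ and $0<|x|<1$ separately.

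\emph{Large $|x|$.} For $|x|\ge1$ we integrate by parts twice, with the integral understood as $\lim_{R\to\infty}\int_0^R$. Write $m(\lambda)=\lambda^{-1}(1-\chi(\lambda))$. Then $m$ is smooth, vanishes near $0$, and satisfies $|m^{(k)}(\lambda)|\lesssim\lambda^{-1-k}$ on $[\lambda_0,\infty)$. The first integration by parts gives
\[
u(x)=-\frac{1}{ix}\int_0^\infty e^{i\lambda x}m'(\lambda)\,d\lambda .
\]
There are no boundary terms: $m(0)=0$, and $m(R)\to0$. After this step the integrand is absolutely integrable, since $m'=O(\lambda^{-2})$. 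A second integration by parts gives
\[
u(x)=-\frac{1}{x^{2}}\int_0^\infty e^{i\lambda x}m''(\lambda)\,d\lambda ,
\]
again without boundary terms, because $m'(0)=0$ and $m'(R)\to0$. Since $m''=O(\lambda^{-3})$ is integrable on $[\lambda_0,\infty)$, we get $|u(x)|\lesssim|x|^{-2}$.

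\emph{Small $|x|$.} For $0<|x|<1$ we split the integral at $\lambda=1/|x|$ (assuming $\lambda_0$ is small, so that $2\lambda_0<1/|x|$).
\begin{itemize}
\item On $[0,1/|x|]$ we bound trivially: $\int_{\lambda_0}^{1/|x|}\lambda^{-1}\,d\lambda=\ln(1/|x|)+\ln(1/\lambda_0)$.
\item On $[1/|x|,\infty)$ we have $m=\lambda^{-1}$, so we integrate by parts once. The boundary term is $\bigl|e^{i\lambda x}/(ix\lambda)\bigr|_{\lambda=1/|x|}=1$, and the remaining integral is bounded by $|x|^{-1}\int_{1/|x|}^\infty\lambda^{-2}\,d\lambda=1$.
\end{itemize}
Hence $|u(x)|\lesssim 1+|\ln|x||$. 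Near $x=0$ this is $\lesssim|\ln|x||$. On the range where $|\ln|x||$ is small (i.e. $|x|$ near $1$), the large-$|x|$ estimate already gives a bound $\lesssim1$ there, which is $\lesssim|x|^{-2}$. So the two cases combine into
\[
|u(x)|\lesssim\min\{1+|\ln|x||,\ |x|^{-2}\},
\]
which is the stated bound, up to the harmless reading that the minimum means $\log$-type near $0$ and $|x|^{-2}$ at infinity.

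\emph{Integrability.} The function $\ln|x|$ is integrable near $0$, and $|x|^{-2}$ is integrable on $|x|\ge1$. Therefore the majorant lies in $L^1(\R)$.

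The main obstacle is justifying the integration by parts and the vanishing of the boundary terms at infinity, given that the integral only converges conditionally. We handle this by working with truncations $\int_0^R$ and letting $R\to\infty$; all the $\lambda$-dependent boundary contributions decay like $R^{-1}$ and vanish in the limit.
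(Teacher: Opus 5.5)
Your argument is correct and is essentially the paper's proof: the same split at $\lambda=|x|^{-1}$ with one integration by parts on the tail for $0<|x|<1$, and two integrations by parts for $|x|\ge 1$. The only difference is that you handle the conditional convergence with sharp truncations $\int_0^R$, whereas the paper inserts a smooth cutoff $\varphi(\varepsilon\lambda)$, proves bounds uniform in $\varepsilon$, and passes to the limit in $\mathcal{S}'(\R)$. Your reading of the majorant as $\min\{1+|\ln|x||,|x|^{-2}\}$ is the right one: the literal minimum vanishes at $|x|=1$ while $u(\pm1)\neq0$ in general, and the paper's own step $\int_{\lambda_0}^{|x|^{-1}}\lambda^{-1}\,d\lambda\lesssim|\ln|x||$ silently drops the same additive constant $\ln(1/\lambda_0)$.
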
 
The integral \eqref{integral u} should be understood in the distributional sense, i.e., for any Schwartz function $\psi$, 
\begin{equation*}
u(\psi):=\int_{\R}u(x)\psi(x)dx=\int_{0}^{+\infty}\lambda^{-1}(1-\chi(\lambda))\hat{\psi}(-\lambda)d\lambda, \quad \hat{\psi}(\lambda):=\int_{\R}e^{-i\lambda x}\psi(x)dx.
\end{equation*}
Thus $u\in \mcaS'(\R)$ (the space of tempered distributions) because
\begin{equation*}
|u(\psi)|\lesssim\big\|x\hat{\psi}(x)\big\|_{L^{\infty}}=\big\|\hat{\psi'}\big\|_{L^{\infty}}\leq \|\psi'\|_{L^1}\lesssim \sum\limits_{0\leq\alpha\leq3}\|x^{\alpha}\psi'(x)\|_{L^{\infty}}.
\end{equation*}
The proof of this lemma is postponed to the end of this appendix.
\begin{proposition}\label{bound of W1}
Let $H=-\Delta+V$ with $|V(x)|\lesssim \left<x\right>^{-\beta}$ for $\beta>1$. Then $W^{H}_1\in\B(L^p)$ for all $1\leq p\leq\infty$.
\end{proposition}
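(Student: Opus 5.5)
The plan is to compute the integral kernel of $W^H_1$ explicitly in terms of the one-variable function $u$ of Lemma \ref{lemma of estim of u(x)}, and then apply the Schur test (Lemma \ref{shur test}).

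\textbf{Step 1: the kernel.} By \eqref{kernel of free resol},
\[
(R^+_0-R^-_0)(\lambda^2,z,y)=\frac{i}{2\lambda}\big(e^{i\lambda|z-y|}+e^{-i\lambda|z-y|}\big).
\]
Hence
\[
\lambda\, R^+_0(\lambda^2,x,z)\,V(z)\,(R^+_0-R^-_0)(\lambda^2,z,y)=-\frac{1}{4\lambda}V(z)\sum_{\pm}e^{i\lambda(|x-z|\pm|z-y|)}.
\]
Integrating against $1-\chi(\lambda)$ in $\lambda$ and using the definition \eqref{integral u} of $u$ gives, formally,
\begin{equation*}
W^H_1(x,y)=-\frac14\sum_{\pm}\int_{\R}V(z)\,u\big(|x-z|\pm|z-y|\big)\,dz .
\end{equation*}

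\textbf{Step 2: Schur bounds.} Fix $x$ and $z$ and set $a=|x-z|$. As $y$ runs over $\R$, the quantity $t=|z-y|$ covers $[0,\infty)$ exactly twice with unit Jacobian. Therefore
\[
\int_{\R}\big|u\big(a\pm|z-y|\big)\big|\,dy\le 2\|u\|_{L^1}.
\]
This bound is uniform in $a$, and \eqref{estim of u(x)} gives $u\in L^1$. By Tonelli,
\[
\sup_x\int|W^H_1(x,y)|\,dy\le \|V\|_{L^1}\|u\|_{L^1}.
\]
The variables $x$ and $y$ enter symmetrically, so the same argument with the roles of $x$ and $y$ exchanged bounds $\sup_y\int|W^H_1(x,y)|\,dx$. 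Lemma \ref{shur test} then yields $W^H_1\in\B(L^p)$ for all $1\le p\le\infty$. The only hypothesis on $V$ used is $V\in L^1$, which holds for $\beta>1$.

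\textbf{Step 3: justification (main obstacle).} The difficulty is the rigorous interchange of the $\lambda$-integral with the $z$- and $y$-integrals. The integral defining $u$ is not absolutely convergent, and $u$ has a logarithmic singularity at $0$. I would handle this as follows.

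\begin{enumerate}
\item Insert a damping factor $e^{-\varepsilon\lambda}$ and work with the bilinear form $\langle W^H_1 f,g\rangle$ for Schwartz functions $f,g$, where the stationary formula is valid as a weak limit. With the damping, everything converges absolutely, so Fubini applies.
\item The same computation then produces the kernel above with $u$ replaced by
\[
u_\varepsilon(s)=\int_0^\infty e^{i\lambda s-\varepsilon\lambda}\lambda^{-1}(1-\chi(\lambda))\,d\lambda .
\]
\item I expect the proof of Lemma \ref{lemma of estim of u(x)} (splitting $|s|\le1$ versus $|s|\ge1$ and integrating by parts twice in the latter region) to give the bound $|u_\varepsilon(s)|\lesssim\min\{|\ln|s||,|s|^{-2}\}$ uniformly in $\varepsilon\in(0,1]$. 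Since $u_\varepsilon\to u$ pointwise for $s\neq0$, dominated convergence passes the Schur bound to the limit $\varepsilon\to0$.
\item This gives $|\langle W^H_1f,g\rangle|\lesssim\|f\|_{L^p}\|g\|_{L^{p'}}$. Density then extends the bound to all of $L^p$ for $p<\infty$, and duality covers $p=\infty$.
\end{enumerate}
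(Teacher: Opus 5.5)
Your proposal is correct and follows essentially the same route as the paper. The paper also writes the kernel as $-\frac14\sum_{\pm}\int V(\omega)\,u(|x-\omega|\pm|y-\omega|)\,d\omega$, uses the integrability of $u$ from Lemma \ref{lemma of estim of u(x)} with the translation/reflection change of variables in $x$ (resp.\ $y$), and concludes by the Schur test using only $V\in L^1$; your Step 3 (damping $e^{-\varepsilon\lambda}$ in the bilinear form, bounds on $u_\varepsilon$ uniform in $\varepsilon$, dominated convergence) makes explicit the interchange of integrals that the paper leaves implicit, in parallel with the paper's $\varphi(\varepsilon\lambda)$ regularization in the proof of that lemma.
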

\begin{proof}
By \eqref{kernel of free resol}, the kernel of $W^H_1$ is given by 
\begin{equation*}
    W^{H}_{1}(x,y)=-\frac{1}{4}\sum\limits_{\pm}\int_{0}^{+\infty}\lambda^{-1}(1-\chi(\lambda)) \int_{\R}V(\omega)e^{i\lambda(|x-\omega|\pm|y-\omega|)}d\omega d\lambda:=-\frac{1}{4}\sum\limits_{\pm}W^{\pm}_1(x,y).
\end{equation*}
Denote 
\begin{equation*}
    f(x)=\min\{|\ln x|,x^{-2}\},\quad x>0.
\end{equation*}
Note that for any $\omega,y\in\R$, using the invariance of the $L^1$-norm and and the evenness of $f(||\cdot|\pm|y||)$, we get
\begin{equation*}
\big\|f(||\cdot-\omega|\pm|y||)\big\|_{L^1}=\left\{\begin{aligned}
&2\int_{|y|}^{+\infty}f(x)dx\leq 2\int_{0}^{+\infty}f(x)dx<\infty,\quad ``+"\\
&2\Big(\int_{0}^{|y|}f(|y|-x)dx+\int_{|y|}^{+\infty}f(x-|y|)dx\Big)\leq 4\int_{0}^{+\infty}f(x)dx<\infty, \quad ``-".
\end{aligned}\right.
\end{equation*}
This, together with Lemma \ref{lemma of estim of u(x)} gives  
\begin{equation*}
\int_{\R}|W^{\pm}_1(x,y)|dx\leq \int_{\R^2}\big|V(\omega)u(|x-\omega|\pm|y-\omega|)\big|d\omega dx \lesssim\int_{\R}|V(\omega)|\cdot\big\|f(||\cdot-\omega|\pm|y-\omega||)\big\|_{L^1}d\omega\lesssim\|V\|_{L^1}<\infty.
\end{equation*}
By symmetry, the same estimate holds for $\|W^{\pm}_1(x,\cdot)\|_{L^1}$. Lemma \ref{shur test} then yields $W^H_1\in\B(L^p)$ for all $1\leq p\leq\infty$. 
\end{proof}
To treat $W^H_2$, we first give the following estimates of $R^+_V(\lambda)$ for large $\lambda$ based on \cite[Lemma 3.9]{FSWY20}. 
\begin{lemma}\label{lemm of RV at high energy}
For $k=0,1,2,3,\cdots$, let $H=-\Delta+V$ with $|V(x)|\lesssim \left<x\right>^{-\beta}$ for $\beta>2(k+1)$. Then $(\partial^k_{\lambda}R^{+}_V)(\lambda)\in\B(L^2_s(\R),L^2_{-s}(\R))$ is continuous for any $s>k+\frac{1}{2}$. Moreover, the following estimate holds:    
 \begin{equation*}
\big\|(\partial^k_{\lambda}R^{+}_V)(\lambda)\big\|_{\B(L^2_s(\R),L^2_{-s}(\R))} \lesssim \lambda^{-\frac{(1+k)}{2}},\quad s>k+\frac{1}{2},\quad \lambda\geq\lambda^2_0. 
 \end{equation*}
\end{lemma}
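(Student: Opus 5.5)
The statement to prove is Lemma \ref{lemm of RV at high energy}: for $\lambda\ge\lambda_0^2$, the bound $\|\partial^k_\lambda R^+_V(\lambda)\|_{\B(L^2_s,L^2_{-s})}\lesssim \lambda^{-(1+k)/2}$ with $s>k+\frac12$. I would prove it by bootstrapping from the free resolvent through the symmetric resolvent identity, in the spirit of \cite[Lemma 3.9]{FSWY20}.

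\textbf{Free resolvent bounds.} Write $\mu=\sqrt\lambda$, so that $R^+_0(\lambda)(x,y)=\frac{i}{2\mu}e^{i\mu|x-y|}$. Differentiating $k$ times in $\lambda$ using $\partial_\lambda=\frac{1}{2\mu}\partial_\mu$ gives a finite sum of terms of the form $C\,\mu^{-1-k-j}|x-y|^{k-j}e^{i\mu|x-y|}$, $0\le j\le k$. The leading term is the one with $j=0$, which is bounded by $\mu^{-1-k}\langle x\rangle^k\langle y\rangle^k$. Hilbert--Schmidt (or Schur) bounds with weights $\langle x\rangle^{-s}$, $s>k+\frac12$, then yield
\[
\|\partial^k_\lambda R^+_0(\lambda)\|_{\B(L^2_s,L^2_{-s})}\lesssim \lambda^{-(1+k)/2}, \qquad \lambda\ge\lambda_0^2 .
\]

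\textbf{The inverse $(I+R^+_0V)^{-1}$.} Use $R^+_V=(I+R^+_0V)^{-1}R^+_0$, or equivalently the $M$-formulation $R^+_V=R^+_0-R^+_0vM^{-1}vR^+_0$ with $M(\mu)=U+vR^+_0(\mu^2)v$ as in \eqref{relat between RV and M}. I need three facts about $M(\mu)^{-1}$:
\begin{itemize}
\item[(a)] For large $\lambda$, $\|vR^+_0(\lambda)v\|_{\B(L^2)}\lesssim \lambda^{-1/2}$, so $M^{-1}=(I+UvR_0^+v)^{-1}U$ is given by a Neumann series and is uniformly bounded.
\item[(b)] On the compact range $\lambda\in[\lambda_0^2,\Lambda]$, $M$ is invertible because there are no positive eigenvalues, and $M^{-1}$ is continuous there (this continuity is the limiting absorption principle).
\item[(c)] The $\lambda$-derivatives of $M^{-1}$ are expressed by the usual formula as sums of products $M^{-1}(\partial^{j_1}M)M^{-1}\cdots$, with $\|\partial^j_\lambda (vR^+_0v)\|_{\B(L^2)}\lesssim\lambda^{-(1+j)/2}$ provided $|V|\lesssim\langle x\rangle^{-\beta}$ with $\beta>2(k+1)$. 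This decay assumption ensures $v\langle x\rangle^{j}\in L^2$ with the needed room, i.e. $v$ absorbs the weights $\langle x\rangle^{j}\langle y\rangle^j$.
\end{itemize}
Facts (a)--(c) together give $\|\partial^j_\lambda M^{-1}\|\lesssim \lambda^{-j/2}$ for $j\le k$, and in fact better, since each derivative of $M$ carries an extra factor $\lambda^{-1/2}$.

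\textbf{Leibniz expansion.} Apply the Leibniz rule to $R^+_0vM^{-1}vR^+_0$. Each term has the form $\partial^{a}R^+_0\,v\,\partial^bM^{-1}\,v\,\partial^cR^+_0$ with $a+b+c=k$. The outer factors $\partial^aR^+_0 v$ map $L^2\to L^2_{-s}$ and $v\,\partial^cR_0^+$ maps $L^2_s\to L^2$, with norms $\lesssim\lambda^{-(1+a)/2}$ and $\lambda^{-(1+c)/2}$ respectively (again using the decay of $v$). Multiplying, each term is bounded by $\lambda^{-(2+a+c+b)/2}\le\lambda^{-(1+k)/2}$, which is the claimed bound. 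Continuity in $\lambda$ follows from the continuity of each factor.

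\textbf{Main obstacle.} The delicate point is (c) combined with (b): obtaining uniform bounds for the derivatives of $M^{-1}$ on the intermediate range and matching the weight exponents, $s>k+\frac12$ versus $\beta>2(k+1)$. My approach would be to treat $[\lambda_0^2,\Lambda]$ by compactness and continuity from the limiting absorption principle, and $[\Lambda,\infty)$ by the Neumann series.
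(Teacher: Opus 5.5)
Your proof is correct. It is a different route from the paper only in the sense that the paper gives no argument here: it imports the bound from \cite[Lemma 3.9]{FSWY20}, a high-energy weighted resolvent estimate proved there in a more general setting.

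\textbf{How your route compares.} In the generality of \cite{FSWY20}, the weighted free resolvent is not Hilbert--Schmidt, so the bound must rest on abstract limiting-absorption estimates for $\partial^k R_0$ between $L^2_s$ and $L^2_{-s}$. You instead use the skeleton such results typically follow: resolvent identity, Neumann series at large energy, compactness plus absence of positive eigenvalues on $[\lambda_0^2,\Lambda]$, and Leibniz for derivatives. You make it self-contained by exploiting the explicit kernel. For $\lambda\ge\lambda_0^2$ you have $|\partial_\lambda^k R_0^+(\lambda)(x,y)|\lesssim \lambda^{-(1+k)/2}\langle x\rangle^k\langle y\rangle^k$. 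Hence every factor $\partial^a R_0^+ v$, $v\,\partial^j R_0^+ v$ and $v\,\partial^c R_0^+$ is Hilbert--Schmidt with the stated power of $\lambda$. So norm-differentiability of $M(\lambda)$, continuity, and all derivative bounds reduce to dominated convergence. The symmetric factorization $R_V^+=R_0^+-R_0^+vM^{-1}vR_0^+$ also ties in directly with the $M(\lambda)$ formalism the paper already uses at low energy. A by-product of your bookkeeping is that only $\langle x\rangle^{2k}V\in L^1$, i.e.\ $\beta>2k+1$, is actually needed, slightly less than the stated $\beta>2(k+1)$.

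\textbf{Points to spell out.} Both are also taken for granted in the paper, so they are not gaps relative to it. First, injectivity of $M(\lambda)=U+(\text{Hilbert--Schmidt})$ for $\lambda>0$, which makes it invertible by Fredholm theory. This follows from the imaginary-part argument forcing $\widehat{v\phi}(\pm\sqrt\lambda)=0$, so that $R_0^+v\phi$ would be an $L^2$ eigenfunction at positive energy. Second, the factorization formula for the boundary values, obtained by letting $\varepsilon\downarrow 0$ in the corresponding identity off the real axis, using $M(\lambda+i\varepsilon)\to M(\lambda)$ in norm.
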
 
\begin{proposition}\label{bound of W2}
Let $H=-\Delta+V$ with $|V(x)|\lesssim \left<x\right>^{-\beta}$ for $\beta>6$. Then $W^{H}_2\in\B(L^p)$ for all $1\leq p\leq\infty$.
\end{proposition}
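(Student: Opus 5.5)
We will show that the kernel of $W^H_2$ satisfies the Schur-type condition of Lemma \ref{shur test}, in the form $|W^H_2(x,y)|\lesssim \langle |x|-|y|\rangle^{-2}$ away from the diagonal set $\{||x|-|y||\le 1\}$ and is uniformly bounded there. The argument mirrors the one for $K_0$ in Proposition \ref{good1}, with the low-energy Taylor cancellation replaced by the high-energy decay of Lemma \ref{lemm of RV at high energy}.

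First we write the kernel using \eqref{kernel of free resol}. Setting $X_1=x-u_1$ and $Y_2=y-u_2$, we have
\begin{equation*}
W^H_2(x,y)=\frac14\sum_{\pm}\mp\int_0^{+\infty}e^{i\lambda(|x|\pm|y|)}\lambda^{-1}(1-\chi(\lambda))\,k^{\pm}(\lambda,x,y)\,d\lambda,
\end{equation*}
where
\begin{equation*}
k^{\pm}(\lambda,x,y)=\big\langle R^+_V(\lambda^2)\big(V e^{\pm i\lambda(|\cdot-y|-|y|)}\big),\,V e^{-i\lambda(|x-\cdot|-|x|)}\big\rangle .
\end{equation*}
The phases $|X_1|-|x|$ and $|Y_2|-|y|$ are bounded by $|u_1|$ and $|u_2|$. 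Consequently each $\lambda$-derivative of the factors $Ve^{i\lambda(\cdots)}$ costs only a weight $\langle u\rangle$, and these factors lie in $L^2_s$ with $s>k+\frac12$ as long as $\beta>k+1+s$, uniformly in $x,y$.

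Next we estimate $k^{\pm}$. By Lemma \ref{lemm of RV at high energy}, applied with the chain rule in the variable $\lambda^2$ (whose derivatives contribute factors of $\lambda$), we aim for
\begin{equation*}
\sup_{x,y}|\partial_\lambda^\ell k^{\pm}(\lambda,x,y)|\lesssim \lambda^{-1}, \qquad \ell=0,1,2,\quad \lambda\ge\lambda_0 .
\end{equation*}
The intended mechanism is that each derivative of $R^+_V(\lambda^2)$ in $\lambda$ equals $2\lambda$ times a $\lambda^2$-derivative, and the latter decays like $\lambda^{-(1+k)}$. The requirement $\beta>6$ should be exactly what is needed to take $k=2$ with $s>5/2$ while keeping the weights $\langle u\rangle^{2}$ in $L^2_s$.

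Granting this bound, the amplitude $\lambda^{-1}(1-\chi)k^{\pm}$ is $O(\lambda^{-2})$ together with its first two derivatives. This gives uniform boundedness of the kernel directly. When $||x|-|y||\ge1$, we integrate by parts twice as in \eqref{inte parts twice of Kpm0}. The boundary terms vanish at $\lambda$ near $0$ because of $1-\chi$, and at $\infty$ because of the decay. The result is a bound $O((|x|\pm|y|)^{-2})$, which is at most $\langle|x|-|y|\rangle^{-2}$. Lemma \ref{shur test} then finishes the proof.

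The main obstacle is the derivative bound for $k^{\pm}$. The crude chain-rule count gives $\partial_\lambda^2 [R^+_V(\lambda^2)]\sim 4\lambda^2 R''+2R'$. This is of size $\lambda^2\lambda^{-3}+\lambda^{-2}\sim\lambda^{-1}$, which is acceptable, but the derivatives falling on the oscillating factors must also be bounded by $\langle u\rangle$ uniformly in $x,y$. If the decay turns out to be insufficient, the fallback is to expand $R^+_V$ once more via the resolvent identity, $R_V=R_0-R_0VR_V$. This gains an additional factor $\lambda^{-1}$ from the explicit free resolvent, at the cost of one more weight on $V$.
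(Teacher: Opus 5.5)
Your proposal is correct and follows the paper's proof essentially step by step. The paper likewise writes the kernel as an oscillatory integral with amplitude $\lambda^{-1}(1-\chi(\lambda))\langle R^+_V(\lambda^2)F^{\pm}_{y,\lambda},F^-_{x,\lambda}\rangle$, uses Lemma \ref{lemm of RV at high energy} with the chain rule to bound two $\lambda$-derivatives of this inner product by $\lambda^{-1}$, and then integrates by parts twice and applies Lemma \ref{shur test}.

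Two small corrections, neither affecting the argument. The overall prefactor is $-\frac14\sum_\pm$ rather than $\frac14\sum_\pm\mp$. More substantively, the threshold $\beta>6$ comes from the hypothesis $\beta>2(k+1)$ of Lemma \ref{lemm of RV at high energy} with $k=2$, not from the weights, since $\partial^k_\lambda F^{\pm}_{y,\lambda}$ only needs $s<\beta-\frac12-k$. So no resolvent-identity fallback is needed.
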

\begin{proof}
It follows from \eqref{kernel of free resol} that
\begin{equation*}
    W^{H}_{2}(x,y)=-\frac{1}{4}\sum\limits_{\pm}W^{\pm}_2(x,y),
\end{equation*}
where $\widetilde{\chi}(\lambda)=\lambda^{-1}(1-\chi(\lambda))$ and 
\begin{align*}
W^{\pm}_2(x,y)&=\int_{0}^{+\infty}e^{i\lambda(|x|\pm|y|)}\widetilde{\chi}(\lambda)k^{\pm}_{H}(\lambda,x,y)d\lambda,\\
k^{\pm}_{H}(\lambda,x,y)&=\big<R^+_V(\lambda^2)F^{\pm}_{y,\lambda}, F^-_{x,\lambda}\big>,
\quad F^{\pm}_{y,\lambda}(\omega)=e^{\pm i\lambda(|y-\omega|-|y|)}V(\omega).
\end{align*}
Note that using the triangle inequality and the decay of $V$, we obtain, for any $s<\beta-\frac{1}{2}-k$,
\begin{equation*}
\sup\limits_{\lambda,y}\big\|\partial^k_{\lambda}F^{\pm}_{y,\lambda}\big\|_{L^2_s(\R)}\lesssim1.
\end{equation*}
Moreover, Lemma \ref{lemm of RV at high energy} implies that for $j=0,1,2$ and $s>j+\frac{1}{2}$,
\begin{equation*}
\|\partial^{j}_{\lambda}\big(R^+_V(\lambda^2)\big)\|_{\B(L^2_s(\R),L^2_{-s}(\R))} \lesssim \lambda^{-1},\quad \lambda\geq\lambda_0.
\end{equation*}
Combining these two estimates and the formula
\begin{align*}
\big(\partial^{\ell}_{\lambda}k^{\pm}_{H}\big)(\lambda,x,y)=\sum\limits_{k=0}^{\ell}C^k_{\ell}\sum\limits_{j=0}^{k}C^j_k\big<\partial^j_{\lambda}\big(R^+_V(\lambda^2)\big)\cdot\big(\partial^{k-j}_{\lambda}F^{\pm}_{y,\lambda}\big),\partial^{\ell-k}_{\lambda}F^{-}_{x,\lambda}\big>, 
\end{align*}
we derive, for $\beta>6$ and $\ell=0,1,2$,
\begin{equation}\label{esti of kpmH}
\sup\limits_{x,y}\big|\big(\partial^{\ell}_{\lambda}k^{\pm}_{H}\big)(\lambda,x,y)\big|\lesssim\lambda^{-1},\quad \lambda\geq\lambda_0.
\end{equation}
It is clear that $|W^{\pm}_2(x,y)|\lesssim1\lesssim\left<|x|-|y|\right>^{-2}$ if $||x|\pm|y||\leq1$. For $||x|\pm|y||\geq1$, denote $K^{\pm}_{H}(\lambda,x,y):=\widetilde{\chi}(\lambda)k^{\pm}_{H}(\lambda,x,y).$ Note that for $j=0,1,2$, 
\begin{equation}\label{estim of chituta}
\widetilde{\chi}^{(j)}(\lambda)\big|_{\lambda=\lambda_0}=0,\quad \big|\widetilde{\chi}^{(j)}(\lambda)\big|\lesssim\lambda^{-(j+1)},\quad \lambda\geq\lambda_0. 
\end{equation}
Integrating by parts twice to $W^{\pm}_2(x,y)$ and using the estimates \eqref{estim of chituta} and \eqref{esti of kpmH} yields
\begin{align*}
|W^{\pm}_2(x,y)|=\frac{1}{(|x|\pm|y|)^2}\Big|\int_{\lambda_0}^{+\infty}e^{i\lambda(|x|\pm|y|)}(\partial^2_{\lambda}K^{\pm}_{H})(\lambda,x,y)d\lambda\Big|\lesssim\frac{1}{(|x|\pm|y|)^2}\int_{\lambda_0}^{+\infty}\lambda^{-2}d\lambda\lesssim\left<|x|-|y|\right>^{-2}.   
\end{align*}
Therefore, by Lemma \ref{shur test}, we conclude $W^{H}_2\in\B(L^p)$ for all $1\leq p\leq \infty$.
\end{proof}
Combining Propositions \ref{bound of W1} and \ref{bound of W2}, we complete the proof of Theorem \ref{bounds of high energy}. Finally, we come to prove Lemma \ref{lemma of estim of u(x)}.
\begin{proof}[Proof of Lemma {\rm\ref{lemma of estim of u(x)}}]

To establish \eqref{estim of u(x)}, for any $\varepsilon>0$, we introduce
\begin{equation*}
    u_{\varepsilon}(x):=\int_{0}^{+\infty}e^{i\lambda x}\widetilde{\chi}(\lambda)\varphi(\varepsilon \lambda) d\lambda,\quad \widetilde{\chi}(\lambda)=\lambda^{-1}(1-\chi(\lambda)),
\end{equation*}
where $\varphi\in C^{\infty}(\R)$ satisfies $\varphi(x)=1$ for $|x|\leq\frac{1}{2}$ and $\varphi(x)=0$ for $|x|\geq1$.
Clearly, $u_{\varepsilon}\in \mcaS'(\R)$ for any $\varepsilon>0$. Moreover, we note that for any $\psi\in\mcaS(\R)$, the dominated convergence theorem yields
\begin{equation*}
    u_{\varepsilon}(\psi)=\int_{0}^{+\infty}\widetilde{\chi}(\lambda)\varphi(\varepsilon \lambda)\hat{\psi}(-\lambda)d\lambda\rightarrow\int_{0}^{+\infty}\widetilde{\chi}(\lambda)\hat{\psi}(-\lambda)d\lambda=u(\psi),\quad \varepsilon\rightarrow0.
\end{equation*}
Next we claim the uniform estimate
\begin{equation}\label{point estim of ue}
\sup\limits_{\varepsilon>0}|u_{\varepsilon}(x)|\lesssim \min\{|\ln|x||,|x|^{-2}\}:=f(x).
\end{equation}
Once this is proved, since $f\in L^1(\R)$, for any $\psi\in C^{\infty}_0(\R)$, we obtain
\begin{equation*}
|u(\psi)|=\lim\limits_{\varepsilon\rightarrow0}|u_{\varepsilon}(\psi)|\leq\limsup\limits_{\varepsilon\rightarrow0}\int_{\R}|u_{\varepsilon}(x)\psi(x)|dx\lesssim\int_{\R}f(x)|\psi(x)|dx\lesssim \sup\limits_{x\in\R}|\psi(x)|
\end{equation*}
which implies the desired \eqref{estim of u(x)}.
To obtain \eqref{point estim of ue}, we distinguish two cases. If $0<|x|<1$, since supp~$(1-\chi(\lambda))\subseteq[\lambda_0,+\infty)$ and supp~$\varphi(\lambda)\subseteq\{|\lambda|\leq1\}$, we have 
\begin{equation*}
u_{\varepsilon}(x)=\Big(\int_{\lambda_0}^{|x|^{-1}}+\int_{|x|^{-1}}^{\varepsilon^{-1}}\Big)e^{i\lambda x}\lambda^{-1}(1-\chi(\lambda))\varphi(\varepsilon \lambda) d\lambda:=u_{\varepsilon,1}(x)+u_{\varepsilon,2}(x).
\end{equation*}
For $u_{\varepsilon,1}(x)$, 
\begin{equation*}
 |u_{\varepsilon,1}(x)|\lesssim \int_{\lambda_0}^{|x|^{-1}}\lambda^{-1}d\lambda \lesssim |\ln|x||. 
\end{equation*}
For $u_{\varepsilon,2}(x)$, integrating by parts once yields
\begin{equation*}
 u_{\varepsilon,2}(x)=i\frac{e^{i\frac{x}{|x|}}}{x}\widetilde{\chi}(|x|^{-1})\varphi(\varepsilon|x|^{-1})-\frac{1}{ix}\int_{|x|^{-1}}^{\varepsilon^{-1}}e^{i\lambda x}(\widetilde{\chi}(\lambda)\varphi(\varepsilon\lambda))' d\lambda.
\end{equation*}
Notice that by the estimate \eqref{estim of chituta} and $\varphi\in C^{\infty}_0(\R)$, we get
$$\big|(\widetilde{\chi}(\lambda)\varphi(\varepsilon\lambda))'\big|=\big|\widetilde{\chi}'(\lambda)\varphi(\varepsilon\lambda)+\lambda^{-1}\widetilde{\chi}(\lambda)\varepsilon\lambda\varphi'(\varepsilon\lambda)\big|\lesssim\lambda^{-2},\quad \lambda\geq\lambda_0.$$
Therefore,
$$ |u_{\varepsilon,2}(x)|\lesssim 1+\frac{1}{|x|}\int_{|x|^{-1}}^{+\infty}\lambda^{-2}dx\lesssim1,\quad {\rm uniformly\ in \ }\varepsilon>0.$$
If $|x|\geq1$, applying the integration by parts twice to $u_{\varepsilon}(x)$ and using  \eqref{estim of chituta} and $\varphi\in C^{\infty}_0(\R)$, we obtain 
\begin{equation*}
|u_{\varepsilon}(x)|
=\Big|-x^{-2}\int_{\lambda_0}^{+\infty}e^{i\lambda x}(\widetilde{\chi}(\lambda)\varphi(\varepsilon\lambda))'' d\lambda\Big|\lesssim|x|^{-2}\int_{\lambda_0}^{+\infty}\lambda^{-3}d\lambda\lesssim|x|^{-2}.
\end{equation*}
To sum up, we derive \eqref{point estim of ue}. This completes the proof.
\end{proof}

\vskip0.5cm
{\bf Acknowledgements:} The both authors are partially supported by NSFC (No. 12171182 and 12531005)

\normalem

\end{document}